\documentclass[submission,copyright,creativecommons]{eptcs}
\providecommand{\event}{QPL 2026} 

\usepackage{iftex}

\ifpdf
  \usepackage{underscore}         
  \usepackage[T1]{fontenc}        
\else
  \usepackage{breakurl}           
\fi

\usepackage{amsmath,amssymb,amsthm}
\usepackage{mathtools}
\usepackage{amssymb}
\usepackage{cmll}
\usepackage{comment}
\usepackage{graphicx}
\usepackage{physics}
\usepackage{stmaryrd}
\usepackage{tikz}
\usepackage{tikz-cd}
\usepackage{tikzit}
\usepackage{circuitikz}
\usepackage{tubes}
\usepackage{xsavebox}

\input{styles.tikzdefs}

\tikzstyle{discarding}=[fill=white, draw=black, shape=circle, style=upground]
\tikzstyle{smalldiscarding}=[fill=white, draw=black, style=upground]
\tikzstyle{backdiscard}=[fill=white, draw=black, shape=circle, style=downground]
\tikzstyle{smallbackdiscard}=[fill=white, draw=black, shape=circle, style=downground]
\tikzstyle{state}=[fill=white, draw=black, style=triang, tikzit shape=rectangle]
\tikzstyle{kstate}=[fill=white, draw=black, style=kpoint, tikzit shape=rectangle]
\tikzstyle{kstateconj}=[fill=white, draw=black, style=kpoint conjugate, tikzit shape=rectangle]
\tikzstyle{kstateBIG}=[fill=white, draw=black, style=big kpoint, tikzit shape=rectangle]
\tikzstyle{effect}=[fill=white, draw=black, style=triangdag]
\tikzstyle{keffect}=[fill=white, draw=black, style=kpoint adjoint]
\tikzstyle{keffectconj}=[fill=white, draw=black, style=kpoint transpose]
\tikzstyle{morphdag}=[style=mapdag]
\tikzstyle{morph}=[style=hadamard]
\tikzstyle{WIDEmorph}=[style=hadamard, minimum width=14mm]
\tikzstyle{morphtrans}=[style=maptrans]
\tikzstyle{morphconj}=[style=mapconj]
\tikzstyle{CPMmorph}=[style=dmap]
\tikzstyle{CPMmorphconj}=[style=dmapconj]
\tikzstyle{CPMmorphdag}=[style=dmapdag]
\tikzstyle{CPMmorphtrans}=[style=dmaptrans]
\tikzstyle{CPMstate}=[fill=white, draw=black, style=triang, doubled]
\tikzstyle{CPMstateBIG}=[fill=white, draw=black, style={triang_lesssep}, doubled]
\tikzstyle{CPMkstate}=[fill=white, draw=black, style=kpoint, tikzit shape=rectangle, doubled]
\tikzstyle{CPMkstateconj}=[fill=white, draw=black, style=kpoint conjugate, tikzit shape=rectangle, doubled]
\tikzstyle{CPMkstateBIG}=[fill=white, draw=black, style=big kpoint, tikzit shape=rectangle, doubled]
\tikzstyle{CPMkeffect}=[fill=white, draw=black, style=kpoint adjoint, doubled]
\tikzstyle{CPMkeffectconj}=[fill=white, draw=black, style=kpoint transpose, doubled]
\tikzstyle{UHfB}=[fill=white, draw=black, style=triangdag, doubled, inner sep=-2pt]
\tikzstyle{leak}=[style=tinypoint, regular polygon rotate=-90]
\tikzstyle{leakfill}=[style=tinypoint, regular polygon rotate=-90, fill=black]
\tikzstyle{Z}=[style=dot, fill=green]
\tikzstyle{X}=[style=dot, fill=red]
\tikzstyle{black_dot}=[style=dot, fill=black]
\tikzstyle{white_dot}=[style=dot, fill=white]
\tikzstyle{qblack_dot}=[style=ddot, fill=black]
\tikzstyle{qwhite_dot}=[style=ddot, fill=white]
\tikzstyle{whitephase}=[style=wphase dot, fill=white]
\tikzstyle{qredphase}=[style=phase dot, fill=red]
\tikzstyle{qgreenphase}=[style=phase dot, fill=green]
\tikzstyle{had}=[style=hadamard, doubled]
\tikzstyle{box}=[style=hadamard]
\tikzstyle{bigbox}=[style=hadamard, minimum height=4mm, minimum width=8mm]
\tikzstyle{classhad}=[style=hadamard]
\tikzstyle{antipode}=[style=anti]

\tikzstyle{dottededge}=[-, dash pattern=on 1pt off 0.7pt]
\tikzstyle{double edge}=[-, style=doubled, draw=black, tikzit draw={rgb,255: red,18; green,168; blue,191}]
\tikzstyle{new edge style 0}=[<-]
\tikzstyle{new edge style 1}=[-, draw={rgb,255: red,242; green,233; blue,206}, fill={rgb,255: red,242; green,233; blue,206}]
\tikzstyle{morphism_shade}=[-, draw=black, fill={rgb,255: red,242; green,233; blue,206}, line join=bevel]
\tikzstyle{supermap_shade}=[-, fill={rgb,255: red,216; green,215; blue,242}, draw=black, line join=bevel]
\tikzstyle{hole_shade}=[-, fill=white, draw=black,line join=bevel]
\tikzstyle{new edge style 2}=[-, draw={rgb,255: red,14; green,188; blue,83}]
\tikzstyle{new edge style 3}=[<-, draw={rgb,255: red,234; green,209; blue,255}]
\tikzstyle{new edge style 4}=[<-, draw={rgb,255: red,0; green,128; blue,128}]
\tikzstyle{new edge style 5}=[-, draw={rgb,255: red,214; green,110; blue,62}]
\tikzstyle{new edge style 6}=[-, draw={rgb,255: red,174; green,20; blue,174}]

\newcommand{\tikzfigscale}[2]{\scalebox{#1}{\input{#2.tikz}}}

\newcommand{\morph}[1]{\xrightarrow{#1}}

\newcommand{\pmorph}{\relbar\joinrel\mapstochar\joinrel\rightarrow}
\newcommand{\cat}[1]{\mathbf{#1}}

\newcommand{\opcat}[1]{#1^\textrm{op}}
\newcommand{\set}{\mathbf{Set}}

\newcommand{\StProf}{\mathbf{StProf}}
\newcommand{\StEnv}{\mathbf{StEnv}}
\newcommand{\Chu}{\mathbf{Chu}}

\newcommand{\opt}{\mathbf{Optic}}

\newcommand{\caus}{\mathbf{Caus}}

\newcommand{\bl}{\mathord{=}}

\newcommand{\seq}{\varolessthan}

\newcommand{\TODO}[1]{\marginpar{\scriptsize{==TODO==\\#1}}}

\newcommand{\C}{\cat{C}}
\newcommand{\F}{\mathcal{F}}

\DeclareFontFamily{U}{min}{}
\DeclareFontShape{U}{min}{m}{n}{<-> udmj30}{}


\newtheorem{theorem}{Theorem}[section]

\newtheorem{proposition}[theorem]{Proposition}
\theoremstyle{definition}
\newtheorem{definition}{Definition}
\theoremstyle{remark}
\newtheorem{remark}[theorem]{Remark}

\pgfsetlayers{background,strings,edgelayer,nodelayer,foreground,main}

\title{Higher-Order Quantum Objects are Strong Profunctors}

\author{Matt Wilson
\institute{Universit\'e Paris-Saclay, CNRS, ENS Paris-Saclay, Inria, Laboratoire M\'ethodes Formelles\\
91190 Gif-sur-Yvette, France}
\email{matthew.wilson@centralesupelec.fr}
\and
James Hefford
\institute{Universit\'e Paris-Saclay, CNRS, ENS Paris-Saclay, Inria, Laboratoire M\'ethodes Formelles\\
91190 Gif-sur-Yvette, France}
\email{james.hefford@inria.fr}
}

\def\titlerunning{Higher-Order Quantum Objects are Strong Profunctors}
\def\authorrunning{M. Wilson \& J. Hefford}

\begin{document}
\maketitle

\begin{abstract}
We explore the sense in which the existing constructions for higher-order maps on quantum theory based on causality constraints and compositionality constraints respectively, coincide. 
More precisely, we construct a functor $\mathcal{F}: \caus(\mathbf{C}) \rightarrow \StProf(\mathbf{C}_1)$ from higher-order causal categories to the category of strong profunctors over first-order causal processes that is lax-lax duoidal, full, faithful, and strongly closed whenever $\mathbf{C}$ is additive. 
When $\mathbf{C} = \mathbf{CP}$ this embedding is furthermore strong on the sequencer for duoidal categories, expressing the possibility to interpret one-way signalling (but not general non-signalling) constraints in terms of the coend calculus for profunctors.
We conclude that insofar as compositional constraints can be used to express causality constraints, the profunctorial approach generalises higher-order quantum theory to a construction over general symmetric monoidal categories. 
\end{abstract}

\section{Introduction}

A basic principle in the foundations of physics is to try to use \textit{compositionality} to study \textit{causality}. Firstly, in the study of causal decompositions, a core theorem on quantum information theory is the realisation that \textit{semi-causality} entails \textit{semi-localizability} both in factor \cite{eggeling, beckman_localisable} and non-factor systems \cite{mestoudjian2026picturinggeneralquantumsubsystems}, a result which can be formalised with the following diagrams representing an equivalence between a causality constraint and a compositionality constraint:  \[ \left(\tikzfigscale{1}{figs/caus_comp_3} \ = \   \tikzfigscale{1}{figs/caus_comp_4}\right) \ \iff \ \left(\tikzfigscale{1}{figs/caus_comp_1} \ = \ \tikzfigscale{1}{figs/caus_comp_2} \right).   \] 
This correspondence breaks down however at the multi-variate level.
For instance, the following implication only holds in one direction with the converse requiring more sophisticated circuit-theoretic techniques to give a faithful compositional representation \cite{lorenz_unitary, vanrietvelde2025causaldecompositions1dquantum, Vanrietvelde_2021}:
 \[  \left(\tikzfigscale{1}{figs/caus_comp_3} \ = \ \tikzfigscale{1}{figs/caus_comp_7}\right)  \  \implies \  \left( \tikzfigscale{1}{figs/caus_comp_1}  \ = \   \tikzfigscale{1}{figs/caus_comp_2} \right) \ \wedge \ \ \left( \tikzfigscale{1}{figs/caus_comp_5} \ = \   \tikzfigscale{1}{figs/caus_comp_6} \right)  .   \] 

Secondly, in the study of higher-order processes in quantum theory \cite{taranto2025higherorderquantumoperations}, it is common to refer to diagrams such as  \[ \tikzfigscale{1}{figs/vertical_physics_1} \quad \quad \textrm{and} \quad \quad    \tikzfigscale{1}{figs/horizontal_physics_1},  \] as representing definite causal structures \cite{chiribella_circuits, oreshkov} and indefinite causal structures \cite{chiribella_switch, oreshkov} respectively, while canonical examples of such higher-order maps such as quantum combs and quantum switches can be thought of as definite and indefinite compositional structures.
The study of causal decompositions and higher-order processes are intimately related: single-input higher-order processes are equivalent to lower-order processes with one-way signalling constraints \cite{chiribella_supermaps, kissinger_caus}; the composition rules for higher-order processes can be characterised in terms of lower-order signalling constraints \cite{apadula2022nosignalling, simmons2024completelogiccausalconsistency}; lower-order processes satisfying single and multi-variate causality constraints are typically taken as the domains of those higher order processes which exhibit definite and indefinite causal structure respectively \cite{kissinger_caus, SimmonsKissinger2022,  simmons2024completelogiccausalconsistency, Bisio_2019, chiribella_switch, Wilson_causal, wilson2023mathematical, jencova2024structurehigherorderquantum, hoffreumon2022projective}; and the same more sophisticated circuit-theoretic techniques are needed to faithfully represent the known unitary indefinite causal structures in a purely compositional way \cite{Vanrietvelde_2025, grothus2025routingquantumcontrolcausal, mejdoub2025unilateraldeterminationcausalorder}.  
 
The purely categorical supermaps approach \cite{hefford_coend, wilson_locality, wilson_polycategories, Hefford_2024, hefford2025bvcategoryspacetimeinterventions} uses the theory of strong profunctors \cite{tambara, pastro_street} to define supermaps in terms of sequential and parallel composition, and so without direct reference to ideas from causality or leveraging categorical features of finite dimensional quantum systems such as compact closure \cite{selinger_cpm}. This is in contrast to the construction of higher-order causal categories \cite{kissinger_caus, simmons2024completelogiccausalconsistency, SimmonsKissinger2022} which directly uses causal and finite dimensional features of quantum theory to build a theory of higher-order quantum objects and their composition rules. 

In previous works, it has been demonstrated that the profunctorial approach recovers the definition of higher-order operation between a broad class of objects \cite{hefford_coend, wilson_locality, wilson_polycategories, Hefford_2024, hefford2025bvcategoryspacetimeinterventions} in a broad range of theories \cite{wilson2026supermapsgeneralisedtheories}. However, what has not been established is the extent to which the type-theory (the logical combinators for higher-order types) given by the profunctorial approach recovers the already well established type theory for higher-order quantum operations between general higher-order quantum objects \cite{kissinger_caus, simmons2024completelogiccausalconsistency, SimmonsKissinger2022, Bisio_2019, jencova2024structurehigherorderquantum, hoffreumon2022projective, Milz_2024}.

In this article we establish a precise relationship between higher-order quantum theory as it is traditionally defined, and the higher-order quantum theory which arises from applying the profunctorial approach to first-order quantum theory. In order to frame the comparison we use the language of categories with more than one tensor product - representing the ways in which systems can be spatio-temporally arranged. More precisely, we refer to the duoidal structure of higher-order processes which makes use of a tensor $\otimes$ and a sequencer $\seq$ representing space-like and time-like separated systems.
For any precausal category $\mathbf{C}$ we construct a functor $\mathcal{F}: \caus(\C) \rightarrow \StProf(\C_1)$ where $\mathbf{C}_1$ is the category of first-order causal processes in $\mathbf{C}$ such that $\mathcal{F}$ is:
\begin{itemize}
  \item injective on objects, 
  \item faithful,
  \item lax on the tensor $\mathcal{F}a \otimes \mathcal{F}b \rightarrow \mathcal{F}(a \otimes b)$, representing the coarse-graining from multi-variate compositionality constraints to multi-variate causality constraints,
  \item lax on the sequencer $\mathcal{F}a\seq \mathcal{F}b \morph{} \mathcal{F}(a \seq b)$. 
\end{itemize}
When the precausal category is additive as defined in \cite{simmons2024completelogiccausalconsistency, SimmonsKissinger2022}, the functor $\mathcal{F}$ is furthermore:
\begin{itemize}
  \item full, 
  \item strongly closed $\mathcal{F}[a,b] \cong [\mathcal{F}a, \mathcal{F}b]$, representing the fact that the entire higher-order structure of higher-order quantum processes lives inside of $\StProf(\C_1)$.
\end{itemize}
Finally, when $\mathbf{C}=\mathbf{CP}$ we find furthermore that $\mathcal{F}$ is strong on the sequencer $\mathcal{F}a\seq \mathcal{F}b \cong\mathcal{F}(a \seq b)$, representing a generalisation of the single-variance theorem which identifies semi-localizabiltiy with semi-causality. 
The laxity for the tensor and strength on the sequencer has a natural interpretation: it encodes the fact that there exists a causal decomposition for one-way signalling processes, but not for non-signalling processes. 
As a result, insofar as compositional constraints can be used to encode causality constraints, the structure of types for strong profunctors recovers the structure for types of higher-order quantum objects. 


\section{Higher-order quantum operations and category theory}



\subsection{Duoidal and BV-categories}

\begin{definition}
  A category $\cat{C}$ is \textit{duoidal} when it is equipped with two tensors $(\otimes,I_\otimes)$ and $(\seq,I_\seq)$ together with natural transformations,
  \begin{gather}\label{eq:duoidal}
    (A\seq B) \otimes (C\seq D) \morph{\delta_{ABCD}} (A\otimes C)\seq (B\otimes D), \qquad I_\otimes \morph{\gamma} I_\otimes\seq I_\otimes, \qquad I_\seq \otimes I_\seq \morph{\mu} I_\seq, \qquad I_\otimes \morph{\nu} I_\seq,
  \end{gather}
  satisfying a series of coherence conditions which can be found in e.g.\ \cite{aguiar}.
  A duoidal category is \textit{normal} when $I_\otimes \cong I_\seq$.
\end{definition}

\begin{definition}
  Let $\cat{C}$ and $\cat{D}$ be duoidal categories.
  A functor $F:\cat{C}\morph{}\cat{D}$ is \textit{lax-lax duoidal} when it is laxly monoidal in both tensor products,
  \begin{equation}\label{eq:duoidal_functor}
    FA \otimes FB \morph{} F(A\otimes B), \qquad I_\otimes \morph{} FI_\otimes, \qquad FA \seq FB \morph{} F(A\seq B), \qquad I_\seq \morph{} FI_\seq,
  \end{equation}
  satisfying the usual coherence conditions, plus additional compatibility with the structural cells \eqref{eq:duoidal}, see \cite{aguiar}.
  We will use terms like \textit{lax-strong} and \textit{strong-strong} to specify that $F$ is strong in one or more of the tensors so that the corresponding distributors of \eqref{eq:duoidal_functor} are isomorphisms.
\end{definition}

\begin{definition}
  A normal duoidal category $\cat{C}$ is a \textit{BV-category} if it equipped with a full and faithful functor $(-)^*:\opcat{\cat{C}}\morph{}\cat{C}$ inducing a natural isomorphism $\cat{C}(A\otimes B,C^*) \cong \cat{C}(A,(B\otimes C)^*)$ and such that there are natural isomorphisms $(A\seq B)^* \cong A^*\seq B^*$ and $I_\seq^*\cong I_\seq$.
  These data must satisfy a number of coherence conditions see \cite{blute,hefford2025bvcategoryspacetimeinterventions}.
\end{definition}
BV-categories are also by the above definition $*$-autonomous, meaning they can be equipped with a third monoidal product referred to as the par $\parr$, a dual to the already available tensor $\otimes$ in the sense that $A \parr B =  (A^* \otimes B^* )^*$.


\subsection{Higher-order causal categories}
Higher-order causal categories \cite{kissinger_caus, simmons2024completelogiccausalconsistency, SimmonsKissinger2022} generalise the construction of higher-order quantum theory ($\mathbf{HOQT}$) \cite{Bisio_2019, apadula2022nosignalling} to a construction which applies to precausal categories, that is, categories which admit both a causality interpretation and compact closure - a signature of finite-dimensionality.
In any precausal category $\mathbf{C}$ and for any subset $c \subseteq \mathbf{C}(I,A)$ we define the dual set \[ c^* = \left\{ \rho \in A^* \ s.t \ \forall \sigma \in c  :  \  \tikzfigscale{1}{figs/closure_of_state}  = 1 \right\} . \] 
We say that a set $c$ is \textit{first-order}, if $c^*$ is a singleton.
This notion of dual can then be used within a variant of double glueing to define the higher-order quantum operations, and more generally build a theory of higher-order causal processes from any precausal category. 
\begin{definition}
A \textit{precausal category} is a compact closed category $\mathbf{C}$ equipped with an environment structure, that is, for each object $A$ an associated preferred discarding effect $\tikzfigscale{1}{figs/ground_no}_A$ such that $\tikzfigscale{1}{figs/ground_no}_{A\otimes B} = \tikzfigscale{1}{figs/ground_no}_A \otimes \tikzfigscale{1}{figs/ground_no}_B$ and $\tikzfigscale{1}{figs/ground_no}_I = 1$.
Precausal categories also satisfy additional properties which we will not state explicitly here as they will not be referred to within this article, for full definitions the reader is referred to \cite{kissinger_caus, simmons2024completelogiccausalconsistency, SimmonsKissinger2022}:
\begin{itemize}
\item Enough causal states (an adaptation of well-pointed-ness) ,
\item$2^\text{nd}$-order causal processes factorise (an adaptation of the decomposition theorem for $1$-input superchannels \cite{chiribella_supermaps}) .
\end{itemize}
\end{definition}
\begin{definition}
  The \textit{Caus-construction} $\mathbf{Caus}(\bullet)$ sends any precausal category $\mathbf{C}$ to a BV-category $\mathbf{Caus}(\mathbf{C})$ with,
  \begin{itemize}
    \item Objects given by pairs $(c,A)$ with $A$ an object of $\mathbf{C}$ and $c \subseteq \mathbf{C}(I,A)$, where $c$ is a set that is \textit{flat} and \textit{closed}, meaning that:
    \begin{itemize}
      \item there exists an invertible scalar $\lambda\in\mathbb{C}$ such that $\left( \lambda\tikzfigscale{1}{figs/flat}  \in c \right)$,
      \item $c^{**} = c$.
    \end{itemize}
    \item Morphisms $(c,A) \rightarrow (c',A')$ given by morphisms $f : A \rightarrow A'$ such that $\forall \rho \in c : \ f \circ \rho \in c'$. 
    \item Dual $(c,A)^* = (c^* , A^*)$.
    \item Tensor product $(c_1,A_1) \otimes (c_2,A_2)  := ((c_1 \times c_2)^{**} , A_1 \otimes A_2)$.
    \item Par product $(c_1,A_1) \parr (c_2,A_2)  := (c_1 \parr c_2 ,  A_1 \otimes A_2)$ where $c_1 \parr c_2 = (c_1^* \times c_2^*)^{*}$.
    \item Sequencing Product $(c_1,A_1) \seq (c_2,A_2)  :=  (c_1\seq c_2, A_1\otimes A_2)$ where
    \[c_1\seq c_2 := \left\{  \tikzfigscale{1}{figs/seq_def_1}  \in (c_1 \parr c_2) \ : \ (c_Z \textrm{ is first-order})\ \wedge ( \rho \in c_1 \parr c_Z ) \wedge (\sigma \in c_Z^* \parr c_2 )  \right\}  .\]
    \item Internal Hom $[(c_1,A_1) , (c_2,A_2)] =  (c_1^* \parr c_2, A_1^* \otimes A_2)$.
  \end{itemize}
\end{definition}
Note that deterministic higher-order quantum theory as it is put forward in \cite{Bisio_2019, apadula2022nosignalling}, arises from keeping only those objects generated by first-order objects and the internal hom. 
\begin{remark}
We will from now on refer to objects of the category $\mathbf{Caus}(\mathbf{C})$ simply by capital symbols $A,B, \dots$. We refer to the set associated to object $A$ as $c_A$.
\end{remark}
Particularly important for our constructions are the first-order objects. 
\begin{definition}
The \textit{first-order objects}, are those objects $X$ within $\mathbf{Caus}(\mathbf{C})$ such that $c_X^*$ is a singleton.
We write $\mathbf{C}_1$ for the full subcategory of $\mathbf{Caus}(\mathbf{C})$ spanned by first-order objects.
When $\mathbf{C} = \mathbf{CP}$, we have that $\mathbf{C}_1 = \mathbf{CPTP}$. 
\end{definition}
\begin{remark}
First-order objects have a number of important properties, if $X$ is first-order, we have that $A \seq X = A \parr X$ for any object $A$. Furthermore, if both $X$ and $Y$ are first-order, then $X \otimes Y = X \seq Y = X \parr Y$. 
\end{remark}

The internal hom $[A,A']$ represents the space of all legitimate transformations from $A$ to $A'$. If $A,A'$ are first-order, then $[A,A']$ represents the set of channels with input $A$ and output $A'$. If $B,B'$ are also first-order, then the type $[[A,A'],[B,B']]$ represents the space of all superchannels which accept channels of type $[A,A']$ and return channels of type $[B,B']$. 

The tensor product represents the no-signalling constraint, for instance, given first-order objects $A,A',B,B'$ the space $[A, A'] \otimes [B,B']$ represents the space of non-signalling channels of type $[A \otimes B, A' \otimes B']$: \[ \phi \in  [A, A'] \otimes [B,B']  \ \iff \  \left( \tikzfigscale{1}{figs/caus_comp_1}  \ = \   \tikzfigscale{1}{figs/caus_comp_2} \right) \ \wedge \ \ \left( \tikzfigscale{1}{figs/caus_comp_5} \ = \   \tikzfigscale{1}{figs/caus_comp_6} \right)   \]
The sequencer, on the other hand, represents a one-way signalling constraint. Given first-order objects $A,A',B,B'$ the space $[A, A'] \seq [B,B']$ represents the space of one-way-signalling channels of type $[A \otimes B, A' \otimes B']$: \[  \phi \in [A, A'] \seq [B,B']  \ \iff \   \left(\tikzfigscale{1}{figs/caus_comp_1} \ = \ \tikzfigscale{1}{figs/caus_comp_2} \right) \ \iff \  \left(\tikzfigscale{1}{figs/caus_comp_3} \ = \   \tikzfigscale{1}{figs/caus_comp_4}\right)  \]
Finally, the par permits all conceivable signalling. More precisely, given first-order objects $A,A',B,B'$ the space $[A, A'] \parr [B,B']$ represents the space of all channels of type $[A \otimes B, A' \otimes B']$. 

\subsection{Strong profunctors}
Our other model of higher-order quantum theory comes from a series of works \cite{wilson_locality,wilson_polycategories,Hefford_2024,hefford2025bvcategoryspacetimeinterventions} using the theory of profunctors.
Let us now introduce the basic notions that will be required to understand this approach.

\begin{definition}
  A \textit{profunctor} $P:\cat{C}\pmorph\cat{C}$ is a functor $P:\opcat{\cat{C}}\times\cat{C}\morph{}\set$.
\end{definition}
We think of a profunctor as a generalised space of processes over $\cat{C}$.
These spaces of processes vary both covariantly and contravariantly over $\cat{C}$ modelling the idea that their inputs and outputs can be probed by $\cat{C}$.
For instance, $P(A,B)$ is a generalised space of processes $A\rightsquigarrow B$ and we can act on such a process by a morphism $B\morph{}C$ of $\cat{C}$ to get another generalised process $A\overset{p}{\rightsquigarrow} B\morph{f}C = P(1,f)(p) \in P(A,C)$.

Profunctors compose by an operation known as a \textit{coend} \cite{loregian_coend},
\begin{equation*}
  (P\seq Q)(A,B) = \int^X P(A,X)\times Q(X,B),
\end{equation*}
which one can think of as analogous to the matrix multiplication $(PQ)_A^B = \sum_X P_A^X Q^B_X$.
The elements of the set $(P\seq Q)(A,B)$ are given by pairs $(p,q)\in P(A,X)\times Q(X,B)$ quotiented by the equivalence relation generated by $(A\overset{p}{\rightsquigarrow} X\morph{f}X', X'\overset{q}{\rightsquigarrow} B) \sim (A\overset{p}{\rightsquigarrow}X, X\morph{f}X'\overset{q}{\rightsquigarrow} B)$ known as the \textit{sliding relations}.
In this sense the coend acts like the tensor product of bimodules.

\begin{definition}
  Let $\cat{C}$ be a monoidal category.
  A profunctor $P:\cat{C}\pmorph\cat{C}$ is \textit{strong} when it is equipped with a natural transformation $\zeta:P(A,B)\times\cat{C}(C,D) \morph{} P(A\otimes C,B\otimes D)$ satisfying coherence conditions making $\zeta$ associative and unital \cite{tambara,pastro_street}.
  A natural transformation $\eta:P\morph{}Q$ between strong profunctors is \textit{strong} when it commutes with the strengths.
\end{definition}

The strong profunctors and strong natural transformations assemble into a category $\StProf(\cat{C})$.
This category is $\otimes$-symmetric normal duoidal \cite{garner,earnshaw} when equipped with two tensors that behave like the tensor product of bimodules in the direction of composition and in the direction of the tensor product of $\cat{C}$.
The former is given by the coend $P\seq Q$ equipped with a canonical strength \cite{pastro_street}.
The latter is given by a quotient of the Day convolution \cite{day_thesis,day}, $\int^{ABCD} \cat{C}(-,A\otimes B) \times P(A,C) \times Q(B,D) \times \cat{C}(C\otimes D,\bl)$, over the category $\opcat{\cat{C}}\times\cat{C}$.
This quotient acts to coequalise the strengths of $P$ and $Q$ much like the tensor product of bimodules, and can be written succinctly as Day convolution over a certain category of \textit{coend optics},
\begin{equation*}
  (P\otimes Q)(-,\bl) = \int^{(A,A'),(B,B')}  P(A,A') \times Q(B,B') \times \opt(\cat{C})((A,A')\otimes(B,B'),(-,\bl)).
\end{equation*}
Both of these tensors have the identity profunctor $1=\cat{C}(-,\bl)$ as their unit object and both are closed with the internal hom for $\otimes$ given by,
\begin{equation*}
  [P,Q]_{\otimes}(-,\bl) = \int_{(A,A')\in\opt(\cat{C})} \set \left( P(A,A'), Q(A\otimes-,A'\otimes\bl) \right).
\end{equation*}


\section{A Lax Duoidal Embedding}
In this section we will prove that for any precausal category, there is a faithful duoidal embedding from $\caus(\C)$ to the category $\StProf(\mathbf{C}_1)$ of strong profunctors over the category of first-order causal systems. 

\begin{proposition}
  The following assignment,
  \begin{itemize}
    \item $\mathcal{F}(A)(X,X') = \caus(\C)(X, A \parr X')$,
    \item $\mathcal{F}(f:A \rightarrow B)(X,X') = \caus(\C)(X, f \parr X')$,
  \end{itemize}
  defines a $\otimes$-lax monoidal functor $\mathcal{F}:\caus(\C) \rightarrow \StProf(\C_1)$.
  Note that $X$ and $X'$ vary over \textit{only} the first-order objects, i.e.\ those of $\C_1$.
\end{proposition}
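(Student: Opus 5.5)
The plan is to check in turn that $\mathcal{F}(A)$ is a profunctor, that it carries a strength, that $\mathcal{F}(f)$ is a strong natural transformation, that $\mathcal{F}$ is functorial, and finally to build and verify the lax $\otimes$-structure.

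\textbf{Profunctor and strength.} Since $\parr$ is a bifunctor on $\caus(\C)$, the assignment $(X,X') \mapsto \caus(\C)(X, A\parr X')$ is a functor $\opcat{\C_1}\times\C_1\morph{}\set$: precompose with $g:Y\morph{}X$ and postcompose with $A\parr h$ for $h:X'\morph{}Y'$. For the strength I take $\phi\in\caus(\C)(X,A\parr X')$ and $g:Y\morph{}Y'$ in $\C_1$, and form
\[ X\otimes Y \morph{\phi\otimes g} (A\parr X')\otimes Y' \morph{} A\parr (X'\otimes Y'). \]
The second arrow is the linear distributivity of the $*$-autonomous category $\caus(\C)$. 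Two points need checking. First, $X\otimes Y$ is again first-order, which follows from the Remark on first-order objects. Second, the coherence conditions (associativity and unitality of $\zeta$) hold, which reduces to the standard coherence of linear distributors with the associator and unitor. Naturality of $\zeta$ in all variables is just bifunctoriality.

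\textbf{Morphisms and functoriality.} $\mathcal{F}(f)$ acts by postcomposition with $f\parr X'$. It is natural in $X$ and $X'$ by bifunctoriality of $\parr$. It commutes with the strength by naturality of the linear distributor in its first argument. Functoriality $\mathcal{F}(gf)=\mathcal{F}(g)\mathcal{F}(f)$ and $\mathcal{F}(1)=1$ are immediate.

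\textbf{Lax $\otimes$-structure.} The unit map $1=\C_1(-,\bl)\morph{}\mathcal{F}(I)$ sends $h:X\morph{}X'$ to $h$ composed with $X'\cong I\parr X'$; here the unit of $\parr$ equals $I$ by normality and self-duality of the unit in a BV-category. For the multiplication I use the universal property of the optic Day convolution given above. It suffices to give a family
\[ \mathcal{F}A(X,X')\times\mathcal{F}B(Y,Y')\times\opt(\C_1)((X,X')\otimes(Y,Y'),(Z,Z'))\morph{}\mathcal{F}(A\otimes B)(Z,Z') \]
that is dinatural and coequalises the strengths. Given $\phi:X\morph{}A\parr X'$, $\psi:Y\morph{}B\parr Y'$ and an optic $(l,r)$ with residual $M$, I form
\[ Z\morph{l} X\otimes Y\otimes M \morph{\phi\otimes\psi\otimes M} (A\parr X')\otimes(B\parr Y')\otimes M \morph{} (A\otimes B)\parr (X'\otimes Y'\otimes M) \morph{(A\otimes B)\parr r} (A\otimes B)\parr Z'. \]
The third arrow is the mixed map $(A\parr X')\otimes(B\parr Y')\morph{}(A\otimes B)\parr(X'\otimes Y')$, obtained from two linear distributions. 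Sliding of residuals and coequalising of strengths then follow from naturality of linear distributivity. Associativity, unitality and symmetry of this lax structure reduce to coherence of linear distributors in $*$-autonomous categories.

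\textbf{Main obstacle.} I expect the hardest part to be this last step, specifically the well-definedness on the coend quotient. The two ways of absorbing a morphism into the strength of $P$ versus the optic must agree, and this needs the interplay of the two distributors to commute. If this becomes messy, I would first present $\mathcal{F}A\otimes\mathcal{F}B$ via the unquotiented Day convolution and check the strength-coequalising separately, using the fact that $\parr$ on first-order objects coincides with $\otimes$ (Remark on first-order objects), which simplifies the residual wires.
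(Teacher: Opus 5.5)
Your construction of the profunctors $\mathcal{F}(A)$, their strength via the linear distributor $(A\parr X')\otimes Y'\to A\parr(X'\otimes Y')$, functoriality and the unit are fine. The gap is in the multiplication. The ``mixed map'' $(A\parr X')\otimes(B\parr Y')\to(A\otimes B)\parr(X'\otimes Y')$ is the medial map, and it cannot be obtained from two linear distributions. It does not exist in a general $*$-autonomous category; it is not provable in MLL. Composing two linear distributions only yields $(A\parr X')\otimes(B\parr Y')\to A\parr B\parr(X'\otimes Y')$. That lands in $\mathcal{F}(A\parr B)$, and there is no arrow $A\parr B\to A\otimes B$ to finish with.

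In $\caus(\C)$ the medial genuinely fails for arbitrary objects. Take $X'=A^*$, $Y'=B^*$ and feed in the cups. The medial would then make the identity $A^*\parr B^*\to A^*\otimes B^*$ a morphism, which for channel types would place every signalling channel in the non-signalling type. So the arrow you need exists only because $X'$ and $Y'$ range over first-order objects, and your construction never uses that hypothesis. For the same reason, your claim that associativity, unitality and well-definedness on the coend ``reduce to coherence of linear distributors'' rests on a map that structure does not supply.

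The missing idea is the one the paper's proof is built on. Since $X'$, $Y'$ and hence $X'\otimes Y'$ are first-order, you have $A\parr X'=A\seq X'$, $B\parr Y'=B\seq Y'$ and $(A\otimes B)\parr(X'\otimes Y')=(A\otimes B)\seq(X'\otimes Y')$. The required arrow is therefore the duoidal interchange $\delta\colon(A\seq X')\otimes(B\seq Y')\to(A\otimes B)\seq(X'\otimes Y')$ of the BV-category $\caus(\C)$.

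The laxator is then $(\phi,\psi)\mapsto\delta\circ(\phi\otimes\psi)$, given as a multi-strong transformation out of $\mathcal{F}A(X,X')\times\mathcal{F}B(Y,Y')$. This is equivalent to your optic Day-convolution family, with the residual $M$ absorbed by $\delta_{A,X',I,M}$ via $M\cong I\seq M$. Its coherence then comes from the duoidal axioms for $\delta$.

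Your closing fallback does not repair this. The identity $\otimes=\parr$ holds only when \emph{both} factors are first-order, whereas $A$ and $B$ are arbitrary. What you need is the one-sided identity $A\parr X'=A\seq X'$ together with the sequencer's interchange law.
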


We will for readability express where possible algebraic/categorical definitions graphically. 
We may draw the strong profunctor $\mathcal{F}(A)$ as 
\[ \mathcal{F}(A)(X,X') := \left\{ \tikzfigscale{1}{figs/diag_str_1} \right\} , \] with action on morphisms, and strength given by \[   \mathcal{F}(A)(g,h) : \tikzfigscale{1}{figs/diag_str_1}  \mapsto   \tikzfigscale{1}{figs/diag_str_2} , \quad \quad  \zeta_{XX'YY'} :\left(  \tikzfigscale{1}{figs/diag_str_1} , \  \tikzfigscale{1}{figs/diag_tuple}  \right)  \mapsto   \tikzfigscale{1}{figs/diag_str_3}. \]
As the functor $\mathcal{F}$ sends each object $A$ to a profunctor $\mathcal{F}(A)(-,=)$, each morphism $f:A \rightarrow B$ must be sent to a strong natural transformation $\mathcal{F}(f)_{X,X'} : \mathcal{F}(A)(X,X') \rightarrow \mathcal{F}(B)(X,X')$. In this case, we define this strong natural transformation by: \[ \mathcal{F}(f)_{XX'} : \tikzfigscale{1}{figs/diag_str_1}  \mapsto   \tikzfigscale{1}{figs/diag_str_4}.  \]
Naturality and strength of the family of functions $\mathcal{F}(f)_{XX'} $ then follows by the unambiguous interpretation of \[  \tikzfigscale{1}{figs/diag_str_6}  \quad \textrm{and} \quad   \tikzfigscale{1}{figs/diag_str_7} ,\] respectively. 
Note, that $\mathcal{F}(A)(X,X') = \caus(\C)(X, A \parr X') \cong  \caus(\C)(I_{\caus(\C)}, A \parr [X, X'])$, where then the existence of this functor as a multifunctor can be understood as arising from the fact that $\caus(\C)$ is a higher-order circuit theory \cite{wilson2026higherordercircuits}.
\begin{proof}
  Note that whenever $X$ is a first-order object $A\seq X = A\parr X$ for any $A$ so that the image of the functor $\mathcal{F}$ can equally be written $\caus(\C)(X,A\parr X') = \caus(\C)(X,A\seq X')$.
  The assignment is clearly functorial so we deal just with the laxators, which are given by,
  \begin{align*}
    \caus(\C)(X,A\seq X') \times \caus(\C)(Y,B\seq Y') & \morph{\otimes} \caus(\C)(X\otimes Y, (A\seq X')\otimes(B\seq Y')) \\
    & \morph{\caus(\C)(1,\delta)} \caus(\C)(X\otimes Y,(A\otimes B)\seq(X'\otimes Y'))
  \end{align*}
  while the unit distributor is given by,
  \begin{equation*}
    \C_1(-,\bl) \cong \caus(\C)(-,\bl) \cong \caus(\C)(-,I\parr\bl)
  \end{equation*}
  since the morphisms in $\caus(\C)$ between first-order objects are precisely those of $\C_1$.
\end{proof}
The preceding proof makes direct use of the fact that to give a strong natural from the tensor $\eta_{XX'} :  (P\otimes Q)(X,X') \rightarrow R(X,X')$ is to give a multi-strong natural transformation from the product $\eta : P(X_p,X_p') \times Q(X_q,X_q') \rightarrow R(X_p \otimes X_q , X_p' \otimes X_q')$. At the level of the product, the laxator simply acts as  \[   \otimes_{X_1 X_1' X_2 X_2'} : \left(  \tikzfigscale{1}{figs/diag_lax_1}  , \   \tikzfigscale{1}{figs/diag_lax_2} \right)  \mapsto   \tikzfigscale{1}{figs/diag_lax_3} . \] 

Before proving that the functor is (lax) monoidal on the sequencer, let us explicitly describe the behaviour of the sequencer of strong profunctors for images of the functor $\mathcal{F}$. 
For any two objects $A,B$ of $\mathbf{Caus}(\mathbf{C})$ the elements of the sequencing tensor product \[  \int^Z \mathcal{F}(A)(-,Z) \times \mathcal{F}(B)(Z,\bl)  \]
are the elements of the cartesian product  \[ \mathbf{Caus}(\mathbf{C})(-, A \parr Z) \times  \mathbf{Caus}(\mathbf{C})(Z, B \parr \bl), \] up to equivalence by transitive closure of the relation \[   \tikzfigscale{1}{figs/new_optic_1} \ \cong \  \tikzfigscale{1}{figs/new_optic_2}  . \]
We must take great care when using this deceivingly nice graphical notation to establish what can be slid along such wires. Any first-order causal morphism, that is, any morphism from $\mathbf{C}_1$, can be slid.
\begin{theorem}
  The functor $\mathcal{F}:\caus(\C) \rightarrow \StProf(\C_1)$ is lax on the sequencing tensor $\seq$, that is there are distributors,
  \[\mathcal{F}(A) \seq \mathcal{F}(B) \morph{} \mathcal{F}(A \seq B), \qquad I\morph{}\mathcal{F}I, \] 
  natural in $A$ and $B$ and satisfying the required coherences.
\end{theorem}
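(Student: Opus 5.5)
The plan is to build the laxator at the level of representatives and then check that it descends to the coend. By the universal property of the coend, a natural transformation $\int^Z \mathcal{F}(A)(X,Z)\times\mathcal{F}(B)(Z,X') \to \mathcal{F}(A\seq B)(X,X')$ amounts to a family of maps
\[ \caus(\C)(X, A\seq Z)\times \caus(\C)(Z, B\seq X') \morph{} \caus(\C)(X, (A\seq B)\seq X'), \]
natural in $X,X'$ and dinatural in the first-order object $Z$. We again use $A\seq Z = A\parr Z$ for first-order $Z$. Given $(f,g)$, we define the image as the composite
\[ X \morph{f} A\seq Z \cong A\seq (I\seq Z) \morph{1\seq(\nu\seq 1)} \cdots \]
More directly, we use that $\seq$ is functorial:
\[ X \morph{f} A\seq Z \morph{1\seq g} A\seq (B\seq X') \cong (A\seq B)\seq X'. \]
Here $1\seq g$ makes sense because $g: Z\to B\seq X'$ is a morphism of $\caus(\C)$ and $\seq$ is a bifunctor on $\caus(\C)$. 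Graphically, this is just plugging the output wire $Z$ of $f$ into the input of $g$. The unit distributor $I\to\mathcal{F}I$ is the isomorphism $\C_1(-,\bl)\cong\caus(\C)(-,I\seq\bl)$ already used in the proof of the previous Proposition, since $I_\otimes=I_\seq$ in the normal duoidal category $\caus(\C)$.

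The key steps, in order, are as follows.

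\begin{enumerate}
\item Dinaturality in $Z$. For a first-order $h:Z\to Z'$, the pairs $((1\seq h)f, g)$ and $(f, g h)$ must have the same image. This is immediate from functoriality of $\seq$: $(1\seq g)(1\seq h)=1\seq(gh)$. This is precisely the statement that first-order morphisms can be slid along the wire, so the map factors through the sliding relations.
\item Naturality in $X,X'$. Precomposition in $X$ is trivial. Postcomposition in $X'$ commutes with $1\seq g$ by functoriality and naturality of the associator.
\item Strength. The strength on $\mathcal{F}(A)\seq\mathcal{F}(B)$ is the canonical Pastro--Street strength, which tensors a first-order $k:Y\to Y'$ onto both factors by threading $Y$ through the intermediate wire. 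We must show the image agrees with applying the strength of $\mathcal{F}(A\seq B)$, i.e.\ tensoring $k$ alongside. This requires interchanging $\otimes$ with $\seq$ via $\delta$, using that for first-order $Y,Y'$ one has $Y\otimes Y' = Y\seq Y'$, and $\delta$ is then compatible.
\item Naturality in $A,B$ follows from bifunctoriality of $\seq$.
\item Associativity and unit coherences reduce to associativity and unitality of $\seq$ in $\caus(\C)$ together with the pasting of coends.
\end{enumerate}

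The main obstacle is the strength compatibility, i.e.\ step 3, together with ensuring every map used actually lives in $\caus(\C)$. The strength of $\mathcal{F}(A)$ produces maps $X\otimes Y\to (A\seq X')\otimes Y'$, which must be moved into $A\seq (X'\otimes Y')$. This can be done via the duoidal interchange $\delta$ with $Y' \cong I\seq Y'$, then composed with the corresponding map for $B$. I would first check that the two resulting composites coincide using the coherence axioms of $\delta$ with the unitors in a normal duoidal category, reducing everything to an equation between morphisms of $\C$ where it holds diagrammatically. Only after that would I treat the remaining duoidal compatibility conditions between the two laxators as a routine, if tedious, diagram chase.
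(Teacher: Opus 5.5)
Your proposal is correct and takes the same approach as the paper: the laxator is composition along the intermediate first-order wire, $X \to A\seq Z \to A\seq(B\seq X') \cong (A\seq B)\seq X'$, and the unit distributor is $\C_1(-,\bl)\cong\caus(\C)(-,I\parr\bl)$. Your dinaturality, naturality and strength checks reduce to equalities of underlying morphisms in $\C$; they fill in details the paper leaves implicit, and you should delete the abandoned false start involving $\nu$.
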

\begin{proof}
  Note that whenever $X$ is a first-order object $A\seq X = A\parr X$ for any $A$ so that the image of the functor $\mathcal{F}$ can equally be written $\caus(\C)(X,A\parr X') = \caus(\C)(X,A\seq X')$.
  The distributor is given by composition along $X$,
  \begin{equation*}
    FA \seq FB = \int^{X\in\C_1} \caus(\C)(-,A\seq X) \times \caus(\C)(X,B\seq \bl) \morph{} \caus(\C)(-,A\seq(B\seq\bl)) \cong F(A\seq B),
  \end{equation*}
  while the unit distributor is given by,
  \begin{equation*}
    \C_1(-,\bl) \cong \caus(\C)(-,\bl) \cong \caus(\C)(-,I\parr\bl)
  \end{equation*}
  since the morphisms in $\caus(\C)$ between first-order objects are precisely those of $\C_1$.
\end{proof}
Graphically, this laxator for the sequencer (which we will refer to as $\theta$) simply contracts a coend-composition into a literal compositional of diagrams \[  \theta_{X,X'} : \  \tikzfigscale{1}{figs/new_optic_0} \ \mapsto \ \tikzfigscale{1}{figs/back_strong_2}. \]

\begin{proposition}
  The functor $\mathcal{F}:\caus(\C) \rightarrow \StProf(\C_1)$ is injective on objects.
\end{proposition}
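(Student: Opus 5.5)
Suppose $\mathcal{F}(A) = \mathcal{F}(B)$ as strong profunctors on $\C_1$. We will recover the pair $(c_A,A)$ from $\mathcal{F}(A)$, which forces $A = B$ as objects of $\caus(\C)$.

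The first step recovers the underlying object. Evaluate at $X = X' = I$, the monoidal unit, which is first-order since $c_I^* = \{1\}$ is a singleton. Then
\[
\mathcal{F}(A)(I,I) = \caus(\C)(I, A \parr I) \cong \caus(\C)(I, A).
\]
Morphisms $I \to A$ in $\caus(\C)$ are exactly those states $\rho : I \to A$ of $\C$ with $\rho \in c_A$, because $c_I$ is the closure of $\{1\}$. So $\mathcal{F}(A)(I,I)$ is the set $c_A \subseteq \C(I,A)$ itself.

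The second step extracts the object from this set. Each element of $\mathcal{F}(A)(I,I)$ is literally a morphism of $\C$ with codomain $A$. Flatness says $c_A$ is nonempty, since it contains a rescaled flat state. Hence the underlying object $A$ can be read off as the common codomain of any element. The set $c_A$ is then the whole of $\mathcal{F}(A)(I,I)$. So $\mathcal{F}(A) = \mathcal{F}(B)$ gives $c_A = c_B$ and $A = B$ in $\C$, and therefore $(c_A,A) = (c_B,B)$.

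The main obstacle is making precise that the identification $\caus(\C)(I, A \parr I) = \caus(\C)(I,A)$ is an equality rather than merely an isomorphism. This depends on the strictness conventions for the unitors in $\caus(\C)$, and on hom-sets being disjoint sets of $\C$-morphisms typed by their codomain. If $\C$ is treated non-strictly, I would instead compose with the unitor $A \parr I \cong A$ inherited from $\C$. I would then note that this unitor is determined by $A$ alone, so equal hom-sets $\mathcal{F}(A)(I,I) = \mathcal{F}(B)(I,I)$ still force the codomains $A \otimes I$ and $B \otimes I$ to agree. Since $c_A$ is recovered as the image under the fixed bijection $\C(I, A \otimes I) \cong \C(I,A)$, the sets $c_A$ and $c_B$ coincide as well.

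If one wants injectivity only up to the natural reading of "injective on objects" modulo isomorphism, the same argument shows more. The pair $(c_A, A)$ is determined by the single value $\mathcal{F}(A)(I,I)$, so no use of the strength or of non-trivial first-order $X, X'$ is needed.
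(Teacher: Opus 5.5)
Your proposal is correct and is essentially the paper's proof: evaluate at $(I,I)$ to get $\caus(\C)(I,A)=c_A$, then use that an object of $\caus(\C)$ is determined by its set of states.

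Your extra bookkeeping goes beyond what the paper spells out, since the paper silently identifies $A\parr I$ with $A$; this covers recovering the underlying object as the common codomain of elements of the nonempty set $c_A$, and the strictness discussion. One small caveat: your non-strict fallback would additionally need $-\otimes I$ to be injective on objects to conclude $A=B$, but under the paper's conventions this is moot.
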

\begin{proof}
  Suppose $\mathcal{F}(A)(-,\bl) = \mathcal{F}(B)(-,\bl)$.
  Upon evaluating at trivial systems we have
  \begin{align*}
    \mathcal{F}(A)(I,I) = \mathcal{F}(B)(I,I) \implies \caus(\C)(I,A) = \caus(\C)(I,B),
  \end{align*}
  but an object of $\caus(\C)$ is completely determined by its set of states so $A=B$.
\end{proof}

\begin{proposition}
The functor $\mathcal{F}:\mathbf{Caus}(\mathbf{C}) \rightarrow \StProf(\mathbf{C}_1)$ is faithful.
\end{proposition}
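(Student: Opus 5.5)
Faithfulness means: if $f,g:A\to B$ in $\caus(\C)$ satisfy $\mathcal{F}(f)=\mathcal{F}(g)$ as strong natural transformations, then $f=g$. The plan is to evaluate the natural transformations at trivial first-order systems, as in the proof of injectivity on objects, and then use that states determine morphisms. Concretely, take $X=X'=I$. The unit $I$ is first-order, since $c_I=\{1\}$ up to closure and $c_I^*$ is the singleton $\{1\}$, so $(I,I)$ is a legitimate pair of objects of $\C_1$. There we have $\mathcal{F}(A)(I,I)=\caus(\C)(I,A\parr I)\cong\caus(\C)(I,A)$, and $\mathcal{F}(f)_{I,I}$ acts by post-composition, $\rho\mapsto f\circ\rho$. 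Hence $\mathcal{F}(f)=\mathcal{F}(g)$ gives $f\circ\rho=g\circ\rho$ for every $\rho\in c_A$.

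The remaining step is to conclude $f=g$ as morphisms $A\to B$ of the underlying precausal category $\C$. I would first pass to $\C$: a morphism of $\caus(\C)$ is just a morphism of $\C$ with a property, so it suffices that $f$ and $g$ agree as maps in $\C$. I would then show that $c_A$ spans enough of $\C(I,A)$ to separate morphisms.

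\textbf{Key tool.} Flatness gives an invertible $\lambda$ with $\lambda\cdot(\text{maximally mixed/flat state})\in c_A$. More usefully, closure $c_A=c_A^{**}$ together with the precausal axiom ``enough causal states'' should imply that $c_A$ linearly spans, or at least separates, $\C(I,A)$. A cleaner route avoids spanning and uses the strength instead of only $X=X'=I$. Take $X=I$ and $X'=A^*$ with a first-order structure on $A^*$; alternatively take a first-order object on the underlying $A^*$, using that every underlying object carries the first-order structure $\{\text{causal states}\}$. Then take elements of $\mathcal{F}(A)(I,X')$ of the form $\rho\otimes$ a causal state. Via compact closure one would like to feed a cup $I\to A\otimes A^*$. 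The cup is not in $c_A\parr c_{X'}$, but a suitably rescaled perturbation of the flat state by it is. Naturality then gives $(f\otimes 1)\circ\text{cup}=(g\otimes 1)\circ\text{cup}$ after subtracting the flat contributions, and this yields $f=g$.

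\textbf{Expected obstacle.} The main difficulty is this last step: extracting $f=g$ from agreement on the restricted set $c_A$ (or on states of $A\parr X'$), because $c_A$ is not all of $\C(I,A)$. I would first try the ``enough causal states'' axiom of precausal categories, interpreted via $c_A^{**}=c_A$ and flatness, to show that the elements of $\mathcal{F}(A)(X,X')$ jointly separate morphisms out of $A$. If genuine linearity is unavailable in a general precausal category, I would fall back on the formulation of enough causal states as ``$f\circ\rho=g\circ\rho$ for all causal $\rho$ implies $f=g$''. I would then reduce to it by choosing $X'$ first-order so that causal states of $A\otimes X'$, bent via compact closure, land in $c_A\parr c_{X'}$, using that for first-order $X'$ the object $A\parr X'$ contains the relevant states.
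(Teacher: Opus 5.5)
There is a genuine gap, and it sits exactly at the step you flag. Your overall shape is right and matches the paper: evaluate $\mathcal{F}(f)$ and $\mathcal{F}(g)$ at $X=I$ and at $X'$ the first-order object on the underlying $A^*$ carrying all causal states (written $|A|$ in the paper), feed in a cup, and bend.

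\textbf{Where it fails.} You assert the cup is unavailable. You replace it with a ``rescaled perturbation of the flat state'' followed by ``subtracting the flat contributions''. That manoeuvre needs sums and cancellation of morphisms, which a general precausal category does not have. The proposition is claimed for arbitrary precausal $\C$; additivity is only assumed later, for fullness.

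Your other routes do not close either. At $X=X'=I$ you only learn $f\circ\rho=g\circ\rho$ for $\rho\in c_A$, and this genuinely fails to separate morphisms: any two distinct elements of $c_A^*$ are morphisms $A\to I$ of $\caus(\C)$ that agree (with value $1$) on all of $c_A$.

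The ``enough causal states'' fallback needs agreement on \emph{all} causal states of the underlying object. Your proposed reduction, that causal states of $A\otimes X'$ land in $c_A\parr c_{X'}$, is unjustified. Writing $\mathrm{disc}$ for the discarding effect, membership in $c_A\parr c_{X'}$ requires normalisation against every $\pi\otimes\mathrm{disc}$ with $\pi\in c_A^*$, not merely against discarding. A causal state of $A\otimes X'$ need not satisfy this unless $A$ is first-order.

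\textbf{The missing observation.} No perturbation is needed, because a rescaled cup already lies in the right set.
\begin{itemize}
\item Since $c_{|A|}^*=\{\mathrm{disc}\}$, we have $c_A\parr c_{|A|}=(c_A^*\times\{\mathrm{disc}\})^*$. These are exactly the states $\sigma$ with $(\pi\otimes\mathrm{disc})\circ\sigma=1$ for all $\pi\in c_A^*$.
\item Take $\sigma=\lambda\cup$, with $\lambda$ the flatness scalar of $A$, and set $\mathrm{flat}_A=(1\otimes\mathrm{disc})\circ\cup$. The condition becomes $\pi\circ(\lambda\cdot\mathrm{flat}_A)=1$. This holds precisely because $\lambda\cdot\mathrm{flat}_A\in c_A$.
\item Hence $\lambda\cup\in\mathcal{F}(A)(I,|A|)$. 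Then $\mathcal{F}(f)_{I,|A|}=\mathcal{F}(g)_{I,|A|}$ gives $(f\otimes 1)\circ\lambda\cup=(g\otimes 1)\circ\lambda\cup$.
\item Invertibility of $\lambda$ and yanking give $f=g$.
\end{itemize}
This argument uses only compact closure and flatness. It needs neither the enough-causal-states axiom nor any linear structure, so it works for every precausal category as the statement requires.
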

\begin{proof}
Given in the Appendix. 
\end{proof}

\section{The Embedding is Full and Strongly Closed}
In this section we show that whenever a precausal category is additive, the embedding of the Caus construction into the category of strong profunctors is full and strongly closed. We refer the reader to \cite{simmons2024completelogiccausalconsistency, SimmonsKissinger2022} for a complete definition of additive precausal categories. Here, we note that such categories include amongst other conditions, the existence of all products (and so in this case additive enrichment), and the possibility to complete any effect to the discard-morphism up to a scalar. These points allow us to prove (given in the Appendix) that closed flat sets are closed under convex combinations, and support a notion of controlled process for any family of processes $\{ \rho_i \}$. 
\begin{proposition}
  Let $\mathbf{C}$ be an additive precausal category, then the functor $\mathcal{F}:\caus(\C) \rightarrow \StProf(\C_1)$ is full.
\end{proposition}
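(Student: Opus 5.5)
Given a strong natural transformation $\eta : \mathcal{F}(A) \Rightarrow \mathcal{F}(B)$, the plan is to read off a candidate morphism $f : A \to B$ from the component $\eta_{I,I'}$ at a suitable pair of first-order objects. I then show that $\mathcal{F}(f) = \eta$ and that $f$ is a legitimate morphism of $\caus(\C)$.

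The natural first try is to evaluate at the unit, $\eta_{I,I} : \caus(\C)(I,A) \to \caus(\C)(I,B)$. This only gives a function on states, not a linear map. So instead I evaluate at a first-order object large enough to carry a ``purification'' of $A$. Concretely, I take $X = I$ and $X' = (A^{*})_1$, meaning the underlying object $A^*$ of $\C$ equipped with the first-order set determined by the discarding effect. Then I feed in the cup state $\cup_A$, suitably renormalised by the flatness scalar $\lambda$ of $c_A$. This needs $\lambda\,\cup_A \in \caus(\C)(I, A \parr (A^*)_1)$, i.e. the cup belongs to the set $c_A \parr c_{X'} = [ (A^*)_1^{\,*}, A]$. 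I expect this from flatness of $c_A$ together with the fact that $(A^*)_1^*$ is the singleton containing the normalised maximally mixed state. The image $\eta(\lambda\cup_A)$ is then a state of $B \otimes A^*$, and bending the wire gives $f : A \to B$, after dividing by $\lambda$.

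Next I show that $\eta = \mathcal{F}(f)$ on every component $\eta_{X,X'}$. Any $\rho \in \caus(\C)(X, A\parr X')$ can be written, using compact closure, as the cup on $A$ followed by a process acting on the $A^*$ leg (and on $X$, $X'$). The key claim is that this process can be taken in $\C_1$, i.e. it is a first-order causal morphism, possibly after an adjustment. If so, naturality in $X'$ and strength (to absorb $X$) slide that process through $\eta$, so that $\eta(\rho) = (f\parr X')\circ\rho$. This is the main obstacle. A generic $\rho$ need not factor causally through the cup, because the transposed map $A^* \otimes X \to X'$ obtained this way is generally not causal.

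Here I use additivity, as the paper signals with convex combinations and controlled processes. I decompose the cup, or rather $\rho$, as a combination $\sum_i \rho_i$ of pieces that each do factor through first-order processes. For instance, I complete each effect appearing in the transpose to the discard up to a scalar, and build a controlled process over a classical index object. I then use that $\eta$ commutes with the induced convex and controlled structure. This compatibility should follow from naturality applied to first-order maps into products, i.e. projections and injections on the classical control system. Linearity of $\eta$ then gives $\eta(\rho) = \sum_i (f \parr X')\circ\rho_i = (f\parr X')\circ\rho$.

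Finally, well-typedness of $f$ follows easily. For $\sigma \in c_A$, the component $\eta_{I,I}$ sends $\sigma$ into $c_B$, and by the previous step $\eta_{I,I}(\sigma) = f\circ\sigma$. Hence $f$ is a morphism $A \to B$ in $\caus(\C)$ with $\mathcal{F}(f) = \eta$, which establishes fullness. Combined with the faithfulness proposition, this also shows that $f$ is the unique such morphism.
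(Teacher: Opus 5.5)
Your candidate preimage matches the paper's $S'$: you evaluate $\eta$ on $\lambda\cup_A$ at $(I,(A^*)_1)$ (the paper's $|A|$ and $\alpha_A$) and bend the wire. Your closing well-typedness and uniqueness remarks are also fine.

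\textbf{The gap.} It lies in the step you flag yourself: proving $\eta_{X,X'}(\rho)=(f\otimes 1_{X'})\circ\rho$ for arbitrary $\rho$. That step is the whole content of fullness, and the mechanism you sketch for it does not work as stated.

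First, $\rho$ cannot in general be written as a convex, or even affine, combination of elements of $\mathcal{F}(A)(X,X')$ that factor as $\lambda\cup_A$ followed by a $\mathbf{C}_1$-morphism $g_i:(A^*)_1\otimes X\to X'$. Every such piece satisfies $\mathrm{discard}_{X'}\circ\rho_i=\lambda\,\mathrm{flat}_A\otimes\mathrm{discard}_X$, where $\mathrm{flat}_A$ is the cup with its $A^*$ leg discarded, and this property is preserved under such combinations. At $X=X'=I$ the only factorable element of $c_A$ is $\lambda\,\mathrm{flat}_A$ itself. Applying $\eta$ termwise to a plain sum of unnormalised pieces is not meaningful either, since $\eta$ only acts on normalised elements.

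Second, once you pass to a controlled process with a classical flag, the causally factorable object lives in another component, say $\mathcal{F}(A)(X,X'\otimes(I\oplus I))$. You then have to get from $\eta$ of it back to $\eta(\rho)$. Your justification, naturality with respect to projections of the control system, is unavailable: a projection $I\oplus I\to I$ is a sub-causal effect, not a morphism of $\mathbf{C}_1$, so strong naturality says nothing about it. Convex-linearity of $\eta_{X,X'}$ is likewise only asserted, not derived.

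\textbf{What is missing, and how the paper supplies it.} You need a way to push sub-causal information through $\eta$. The paper provides this in two lemmas.

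Convex-linearity follows by writing $\sum_ip_i\rho_i$ as $\texttt{ctrl}\{\rho_i\}\in\mathcal{F}(A)((\oplus_iI)\otimes X,X')$ precomposed with the causal classical state $\sum_ip_i|i\rangle$, then using naturality and additive enrichment. Only causal injections are needed here.

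The non-contextual-assignment lemma states: if $\pi_1\circ\rho_1=\pi_2\circ\rho_2$ with effects $\pi_i\le\mathrm{discard}$, then $\pi_1\circ\eta(\rho_1)=\pi_2\circ\eta(\rho_2)$. It is proved by completing each $\pi_i$ to a causal two-outcome instrument $\Sigma_i$ with branches $\pi_i$ and $\mathrm{discard}-\pi_i$. A difference of terms is then rewritten as an equality of $\tfrac12$-mixtures, so that convex-linearity applies, and one cancels in the additive hom-sets. The lemma is applied with the sub-causal effect $\kappa(\cap\otimes\pi)$ relating $\rho$ to $\lambda\cup_A$, and tomography over enough $\pi$ finishes the proof.

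\textbf{A possible repair of your route.} Your flag idea can be made to work in the same spirit, and it even avoids tomography.
Let $g$ be the transpose of $\rho$, and complete $\mathrm{discard}_{X'}\circ g$ to a causal $h=\kappa g\otimes|0\rangle+(\sigma_0\circ e)\otimes|1\rangle$, where $e$ is the complementary effect and $\sigma_0$ is any causal state of $X'$.
Then $(1_A\otimes h)\circ(\lambda\cup_A\otimes 1_X)$ is the convex combination $\kappa\lambda\,(\rho\otimes|0\rangle)+(1-\kappa\lambda)(r\otimes|1\rangle)$, with $r$ a normalised element of $\mathcal{F}(A)(X,X')$ by linearity of normalisation.
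Compute $\eta$ of this process twice: once via strong naturality along $h$, which gives $f\otimes 1$ applied to it, and once via convex-linearity plus naturality along the causal maps $1_{X'}\otimes|j\rangle$.
Only then apply $\langle 0|$, as a morphism of $\mathbf{C}$, to both sides.

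None of this, nor the convex-linearity it relies on, is in the proposal as written.
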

\begin{proof}
 To prove that $\mathcal{F}$ is full, amounts to a generalisation of the characterisation theorem of \cite{wilson_locality} which applies to convex normal sets of processes, to general types in additive higher-order causal categories. 
 The proof adopts a graphical notation in which a natural transformation $S_{XX'} : \mathcal{F}(A)(X,X') \rightarrow \mathcal{F}(B)(X,X')$ is depicted:  \[ \tikzfigscale{1}{figs/diag_net_1} , \] with strong naturality encoded graphically by  \[ \tikzfigscale{1}{figs/diag_nat_2} \ = \ \tikzfigscale{1}{figs/diag_nat_3} , \quad \quad \tikzfigscale{1}{figs/diag_nat_4} \ = \ \tikzfigscale{1}{figs/diag_nat_5} , \]
 and then takes three key steps, following the methods of \cite{wilson2026supermapsgeneralisedtheories, wilson_locality}:
 \begin{itemize}
 \item Proving convex-linearity of each function $S_{XX'}$,
 \item Proving what is referred to in \cite{wilson2026supermapsgeneralisedtheories} as non-contextual assignment, that for even non-causal $\pi_i \leq  \tikzfigscale{1}{figs/ground}$:  \[  \tikzfigscale{1}{figs/extension_1} \ = \   \tikzfigscale{1}{figs/extension_2} \implies \tikzfigscale{1}{figs/extension_3} \ = \   \tikzfigscale{1}{figs/extension_4},  \] 
 \item A Yoneda-like step, in which the candidate internal representation for $S_{XX'}$ is constructed by evaluating it on the cup $\cup$ (this is most similar structurally to the ansatz used in \cite{wilson_locality}) up to a scalar $\alpha_A$: \[   S'  \ :=   \ \tikzfigscale{1}{figs/preimage_1},  \] where $\alpha_A$ satisfies  \[ \tikzfigscale{1}{figs/cup_1} \in c_{A \parr |A|}. \]
 \end{itemize} 
 A full proof for completeness is given in the Appendix.
\end{proof}

\begin{proposition}
  Let $\mathbf{C}$ be an additive precausal category, then the functor $\mathcal{F}:\caus(\C) \rightarrow \StProf(\C_1)$ is strongly closed.
\end{proposition}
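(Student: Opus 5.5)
Strong closedness means a canonical comparison $\mathcal{F}[A,B] \to [\mathcal{F}A,\mathcal{F}B]_\otimes$ exists and is invertible. We first build this map from the $\otimes$-laxator, then prove it is bijective at every component $(Y,Y')$ by reducing to fullness and faithfulness.

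\emph{The comparison map.} Let $\mathrm{ev}:[A,B]\otimes A\to B$ be the evaluation of $\caus(\C)$. The laxator gives
\[\mathcal{F}[A,B]\otimes\mathcal{F}A\morph{}\mathcal{F}([A,B]\otimes A)\morph{\mathcal{F}(\mathrm{ev})}\mathcal{F}B.\]
Currying along the closed structure of $\StProf(\C_1)$ yields the comparison $\Phi:\mathcal{F}[A,B]\to[\mathcal{F}A,\mathcal{F}B]_\otimes$. Concretely, at $(Y,Y')$ an element $\phi\in\caus(\C)(Y,[A,B]\parr Y')$ is sent to the family of functions
\[\mathcal{F}(A)(X,X')\to\mathcal{F}(B)(X\otimes Y,X'\otimes Y').\]
Each function tensors $\rho:X\to A\parr X'$ with $\phi$, applies $\delta$ to get into $(A\otimes[A,B])\seq(X'\otimes Y')$, and then evaluates. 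Naturality in $(X,X')$ over $\opt(\C_1)$ and strength are inherited from the laxator, so they need no separate verification.

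\emph{Reduction to fullness and faithfulness.} Using the end formula, an element of $[\mathcal{F}A,\mathcal{F}B]_\otimes(Y,Y')$ is a family $\mathcal{F}A(X,X')\to\mathcal{F}B(X\otimes Y,X'\otimes Y')$ that is natural and strong over optics. This is exactly a strong natural transformation
\[\mathcal{F}A\Rightarrow\mathcal{F}B(-\otimes Y,\bl\otimes Y').\]
Since $Y,Y'$ are first-order, $\mathcal{F}B(X\otimes Y,X'\otimes Y')\cong\caus(\C)(X,B\parr Y^*\parr Y'\parr X')$ by compact closure within $\caus(\C)$, and $Y^*\parr Y'=[Y,Y']$. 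Hence
\[\mathcal{F}B(-\otimes Y,\bl\otimes Y')\cong\mathcal{F}(B\parr[Y,Y']),\]
and this isomorphism is compatible with strengths and optic sliding because it merely bends first-order wires. The right-hand side at $(Y,Y')$ is therefore $\StProf(\C_1)(\mathcal{F}A,\mathcal{F}(B\parr[Y,Y']))$. By the fullness proposition for additive $\C$ and the faithfulness proposition, $\mathcal{F}$ is full and faithful, so this set is in bijection with
\[\caus(\C)(A,B\parr[Y,Y'])\cong\caus(\C)(Y,[A,B]\parr Y')=\mathcal{F}[A,B](Y,Y'),\]
the last isomorphism coming from $*$-autonomy of $\caus(\C)$.

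\emph{Main obstacle.} It remains to check that this chain of bijections coincides with $\Phi$, and that it is natural in $(Y,Y')$ and strong. The delicate step is the identification $\mathcal{F}B(-\otimes Y,\bl\otimes Y')\cong\mathcal{F}(B\parr[Y,Y'])$. One must confirm that in $\caus(\C)$ the sets $c$ of $X\to B\parr X'$ twisted by $Y,Y'$ are exactly those of $B\parr Y^*\parr Y'$, i.e.\ that $\parr$ with first-order objects agrees with $\seq$ and behaves associatively here; the earlier remark that $A\seq X=A\parr X$ for first-order $X$ gives this. To match the composite with $\Phi$, I would evaluate both on the cup element used in the fullness proof and appeal to faithfulness: the bijection's inverse is given by that Yoneda-like evaluation, and plugging $\Phi(\phi)$ into the cup returns $\phi$ after yanking. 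Naturality in $(Y,Y')$ then follows from naturality of the $*$-autonomous adjunction, and strength follows from the graphical interpretation as in the definition of $\mathcal{F}(f)$.
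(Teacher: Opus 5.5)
Your proposal is correct and follows essentially the paper's route. Both arguments identify $[\F A,\F B](Y,Y')$ with strong natural transformations $\F A\Rightarrow\F B(-\otimes Y,\bl\otimes Y')\cong\F(B\parr[Y,Y'])$, then conclude using full faithfulness of $\F$ and $*$-autonomy of $\caus(\C)$. Your extra check that this chain agrees with the canonical comparison built from the laxator is something the paper leaves implicit. One correction: the wire-bending step follows from $*$-autonomy together with $X'\otimes Y'=X'\parr Y'$ for first-order objects, not from compact closure of $\caus(\C)$, which does not hold.
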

\begin{proof}
  For any pair of strong profunctors,
  \begin{equation*}
    [P,Q](X,X') = \int_{(A,A')} \set \left( P(A,A'), Q(A\otimes X,A'\otimes X') \right) \cong \StProf(\C_1) \left(P(-,\bl),Q(-\otimes X,\bl\otimes X') \right)
  \end{equation*}
and so we have that 
  \begin{align*}
    [\F A,\F B](X,X') & \cong \StProf(\C_1) \left(\caus(\C)(-,A\parr \bl), \caus(\C)(-\otimes X,B \parr (\bl\otimes X')) \right) \\
    & \cong \StProf(\C_1) \left(\caus(\C)(-,A\parr \bl), \caus(\C)(-,([X,B \parr X']) \parr (\bl )) \right) \\
        & \cong  \caus(\C)(A, [X,B \parr X']) \\
                & \cong  \caus(\C)(X, [A,B] \parr X') \\
                & \cong \F[ A,B](X,X').
  \end{align*}
\end{proof}

\section{The Embedding is Strong on the Sequencer}
Now we establish an even stronger property of the embedding, which we prove only for specifically higher-order quantum theory, based on the fact that combs are equal to optics in mixed quantum theory \cite{Hefford_2024}.
First, we will need a preliminary lemma. 
\begin{proposition}\label{prop:firstorder_existence}
  Let $\rho_1,\sigma_1, \rho_2',\sigma_2'$ be morphisms in $\mathbf{Caus}(\mathbf{CP})$ such that  \[    \tikzfigscale{1}{figs/iback_strong_2}  \ = \   \tikzfigscale{1}{figs/back_strong_3} ,\] with $X,X',Z_1,Z_2$ first-order objects.
  Then there exists corresponding purifications $P_1, \Sigma_1,P_2, \Sigma_2 $, and first-order causal processes $\pi$ and $v$ such that: 
  \[   \tikzfigscale{1}{figs/dilate_strong_5} \ = \      \tikzfigscale{1}{figs/dilate_strong_6} ,  \quad \quad \tikzfigscale{1}{figs/shadow_1} \ = \   \tikzfigscale{1}{figs/shadow_2},  \quad \quad   \tikzfigscale{1}{figs/shadow_3} \ = \   \tikzfigscale{1}{figs/shadow_4}.  \] 
\end{proposition}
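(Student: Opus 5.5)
The plan is to reduce the statement to the essential uniqueness of purification in $\mathbf{CP}$, in the style of the argument that combs coincide with optics in mixed quantum theory \cite{Hefford_2024}. Since the diagrams are not reproduced here, I read the hypothesis as follows: two coend representatives $(\rho_1,\sigma_1)$, routed through $Z_1$, and $(\rho_2',\sigma_2')$, routed through $Z_2$, have equal images under the laxator $\theta$. I read the conclusion as asking for a zig-zag of first-order causal processes relating dilated versions of the two pairs, with the two ``shadow'' equations saying that discarding environments recovers the original morphisms.

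\textbf{Step 1: purify each component.} Each of $\rho_1,\sigma_1,\rho_2',\sigma_2'$ is a CP map between finite-dimensional systems. I take Stinespring dilations $P_1,\Sigma_1,P_2,\Sigma_2$ with environments $E_1,F_1,E_2,F_2$. Discarding the environment recovers the original maps, which gives the shadow-type equations directly. The sequential composites $\Sigma_1\circ P_1$ and $\Sigma_2\circ P_2$ are then pure. After discarding the joint environments, both reduce to the common map assumed equal in the hypothesis. So they are two purifications of the same CP map.

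\textbf{Step 2: relate the purifications.} By essential uniqueness of purification, after padding the environments to equal dimension, the two composites are related by a unitary or isometry $u$ acting only on the environments. This is the dilated equation. I then read $u$ as a relation between the intermediate wires. The isometry $P_1$, followed by partial contraction against $\Sigma_1$'s environment data, determines a map from $Z_1\otimes E_1$ to $Z_2\otimes E_2$. Using the Stinespring pseudo-inverse of $P_2$ (its adjoint), I define candidate maps $\pi$ (slidable from the left) and $v$ (from the right). The remaining two equations should then follow by sliding, with $P_1$ mapped to $\pi$-composed $P_2$, and dually for $\Sigma$.

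\textbf{Step 3 (main obstacle): causality of $\pi$ and $v$.} The delicate point is showing that $\pi$ and $v$ are first-order causal, i.e.\ morphisms of $\mathbf{C}_1=\mathbf{CPTP}$, and not merely CP. Only such morphisms may be slid through the coend. The adjoint-isometry construction only yields trace-nonincreasing maps. I would first use that $Z_1,Z_2,X,X'$ are first-order, so that $\sigma_1,\sigma_2'$ discard to the unique causal effect. This should force the candidate maps to be trace-preserving on the support of the relevant states. I would then complete them to full channels by adding an orthogonal branch, namely a CP map onto the complement of the support followed by some fixed causal state. This completion must not disturb the equations, since those equations only probe the support. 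If the support argument fails, the fallback is to enlarge the environments further so that $u$ becomes a genuine unitary. Then $\pi=u$ restricted appropriately is automatically causal, at the cost of re-checking the shadow equations with the enlarged purifications.
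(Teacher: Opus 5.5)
The gap is in Step 2, where you apply uniqueness of purification to the wrong pair of maps. Relating the two pure composites gives a $u\colon E_1\otimes F_1\to E_2\otimes F_2$ on the joint environment. Its $F_i$ part is only created by $\Sigma_i$, after the middle wire, so $u$ cannot be slid through the coend. Your recipe (``partial contraction against $\Sigma_1$'s environment'' plus the adjoint of $P_2$) does not specify a map $Z_1\otimes E_1\to Z_2\otimes E_2$ intertwining $P_1$ with $P_2$. Note also that $P_1,P_2$ are not isometries: $\rho_1\colon X\to A\parr Z_1$ with $A$ higher-order is not trace preserving, so $P_2^\dagger$ is not a left inverse of anything relevant.

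The missing idea is to obtain an equation on the first halves alone. Since $Z_i$ and $X'$ are first-order, $\sigma_i$ sends every normalised state of $Z_i$ into $c_{B\parr X'}=(c_B^*\times c_{X'}^*)^*$. Hence for any $e\in c_B^*$ the effect $e\otimes\mathrm{tr}_{X'}$ pulls back along $\sigma_i$ to the discard on $Z_i$. Applying it to both sides of the hypothesis shows that $P_1$ and $P_2$ are two purifications of the same map $X\to A$, with environments $Z_i\otimes E_i$. Uniqueness of purification (enlarging $E_2$ if needed) then gives an isometry $v\colon Z_1\otimes E_1\to Z_2\otimes E_2$ with $(1_A\otimes v)\circ P_1=P_2$. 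This is the paper's first equation. Here $v$ lives on the middle wire, so it can be slid, and as an isometric channel it is first-order causal for free, so your Step 3 worry about $v$ disappears.

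Next, substitute $P_2=(1_A\otimes v)\circ P_1$ into the hypothesis. Write $\tilde\Sigma_i$ for $\sigma_i$ with the $E_i$ wire discarded. The hypothesis then only tells you that $\tilde\Sigma_1$ and $\tilde\Sigma_2\circ v$ agree on operators supported on the support $K\subseteq Z_1\otimes E_1$ of $P_1$. Off $K$ they are unrelated, so any usable equation between them must be precomposed with a channel that first pushes everything into $K$.

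This is the role of the shadow of Carette et al.\ that the paper invokes: an idempotent CPTP map on the single wire $Z_1\otimes E_1$. For instance, take $\pi(\tau)=Q\tau Q+\mathrm{tr}((1-Q)\tau)\,\omega$ with $Q$ the projector onto $K$ and $\omega$ a state on $K$. It fixes $P_1$ and satisfies $\tilde\Sigma_1\circ\pi=\tilde\Sigma_2\circ v\circ\pi$. The last two equations of the statement are these shadow equations, not the ``discarding recovers $\rho_i,\sigma_i$'' identities you took them to be.

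Your orthogonal-branch completion in Step 3 is essentially this trick. However, it must be applied to the identity on $K$, producing an idempotent on one middle wire, not to a cross map between the two middle wires. That idempotent is what makes ``the equations only probe the support'' true. Your fallback of enlarging environments until $u$ is unitary does not help, since $u$ still acts on the wrong wires.
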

\begin{proof}
  Given in the appendix.
\end{proof}

\begin{theorem}
  The functor $\mathcal{F}:\caus(\mathbf{CP}) \rightarrow \StProf(\mathbf{CPTP})$ is strong on the sequencing tensor $\seq$ so that for each pair of objects $A,B$,
  \[\mathcal{F}A \seq \mathcal{F}B \cong \mathcal{F}(A \seq B).\] 
\end{theorem}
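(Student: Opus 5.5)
We will show that the laxator $\theta:\mathcal{F}A\seq\mathcal{F}B\morph{}\mathcal{F}(A\seq B)$ from the earlier theorem is a bijection at every component $(X,X')$ with $X,X'$ first-order. Naturality and compatibility with the strengths are already established, so only bijectivity needs proof.

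For \emph{surjectivity} we start from an element $\phi\in\caus(\mathbf{CP})(X,(A\seq B)\parr X')=\caus(\mathbf{CP})(X,A\seq B\seq X')$. Via the compact closed structure and the isomorphism $\caus(\C)(X,C\parr X')\cong\caus(\C)(I,C\parr[X,X'])$, this is a state of $A\seq(B\seq[X,X'])$. We unfold the definition of $c_1\seq c_2$ recursively, grouping $A$ with the input $X$ and $B$ with the output $X'$. This gives first-order $c_Z$, a morphism $\rho\in c_A\parr c_Z$ and a morphism $\sigma\in c_Z^*\parr c_{B\seq X'}$ whose composite along $Z$ is $\phi$. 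The membership conditions mean that $\rho$, read with input $X$, lies in $\caus(X,A\parr Z)$ and that $\sigma\in\caus(Z,B\parr X')$. Here $Z$ is first-order, being the object with set $c_Z$. The pair $(\rho,\sigma)$ therefore represents an element of the coend, and $\theta$ sends it to $\phi$.

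Care is needed with the input $X$. If the decomposition of $A\seq B\seq[X,X']$ places $X$ inside the second factor, we first use the remark that $A\seq X=A\parr X$ for first-order $X$, together with associativity of $\seq$, to move $X$ to the front. The sequencing then reads $X\to A,Z\to B,X'$, and the decomposition is a one-way signalling (comb) decomposition.

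For \emph{injectivity}, suppose two representatives $(\rho_1,\sigma_1)$ through $Z_1$ and $(\rho_2,\sigma_2)$ through $Z_2$ have equal composites. We must show that they are identified by the sliding relations of the coend over $\mathbf{CPTP}$. This is the main obstacle, since a coend over $\mathbf{CPTP}$ only allows sliding causal first-order maps, and the standard ``combs are optics'' argument of \cite{Hefford_2024} must be adapted to the higher-order typed legs $A,B$.

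Proposition~\ref{prop:firstorder_existence} is designed for exactly this step. It gives purifications $P_i,\Sigma_i$ of the two decompositions, together with first-order causal processes $\pi$ and $v$. The first equation relates the dilated forms, and the two shadow equations express $(\rho_i,\sigma_i)$ as obtained from $(P_i,\Sigma_i)$ by discarding or post-composing along first-order wires. We then build a zig-zag of elementary sliding moves:
\[(\rho_1,\sigma_1)\sim(P_1,\Sigma_1)\sim(P_2,\Sigma_2)\sim(\rho_2,\sigma_2).\]
The outer steps slide a discard or a causal map that includes the purifying environment into the wire $Z_i$. This is allowed because discarding, and $\pi$ and $v$, are morphisms of $\mathbf{CPTP}$. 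The middle step slides $v$ or $\pi$ across the joined environment wire.

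At each slide we must check that the intermediate pieces still belong to $\caus(X,A\parr Z)$ and $\caus(Z,B\parr X')$, so that the relation lives in the coend. We expect this membership to follow from the shadow equations: applying a first-order causal map preserves membership because morphisms of $\caus$ compose.

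Together, surjectivity and injectivity show that each component of $\theta$ is a bijection, hence $\theta$ is invertible. Since the unit distributor is already an isomorphism, $\mathcal{F}A\seq\mathcal{F}B\cong\mathcal{F}(A\seq B)$.
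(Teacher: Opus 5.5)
Your overall plan matches the paper's. Showing each $\theta_{X,X'}$ is bijective amounts to the paper's construction of $\theta^{-1}$ from a decomposition through a first-order $Z$, together with its well-definedness check. For that check you rely on Proposition~\ref{prop:firstorder_existence} and slides of $\mathbf{CPTP}$ maps, as the paper does.

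The gap is in how you obtain the decomposition. Your identification of $\caus(\mathbf{CP})(X,(A\seq B)\parr X')$ with the states of $A\seq(B\seq[X,X'])$ is false. The rule $C\parr Y=C\seq Y$ needs $Y$ first-order, and $[X,X']$ is not. Take $B=I$ and $A=[Y,Y']$ with $Y,Y'$ first-order. The left-hand side is then all channels $X\otimes Y\to Y'\otimes X'$, whereas $[Y,Y']\seq[X,X']$ contains only those in which $X$ does not signal to $Y'$. In general, a witness for membership in $A\seq(B\seq[X,X'])$ produces its $A$-part with no access to $X$, so unfolding it can never give $\rho\in\caus(X,A\parr Z)$.

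Your patch does not repair this. The remark $A\seq X=A\parr X$ concerns a first-order object in the \emph{last} slot, so it handles $X'$. The input instead enters as $[X,I]$ in the \emph{first} slot. What you need is that the element lies in $[X,A]\seq(B\parr X')$; one unfolding of the definition of $\seq$ then gives $\rho\in\caus(X,A\parr Z)$ and $\sigma\in\caus(Z,B\parr X')$. The paper writes $[X,X']\cong[X,I]\seq[I,X']$, which does follow from the remark since $X'$ is first-order. It then uses a map $(A\seq B)\parr([X,I]\seq[I,X'])\to(A\parr[X,I])\seq(B\parr[I,X'])$. This is not a generic BV structure map, since the medial of a BV-category points the other way. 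It is available here because $[X,I]$ has no outputs and $[I,X']$ has no inputs. One way to justify it is the mirror image of the remark, $[X,I]\parr C=[X,I]\seq C$ for first-order $X$, combined with associativity of $\seq$. This is presumably what your ``move $X$ to the front'' was reaching for, but it is a different fact from the one you cite.

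On injectivity your sketch is compatible with the paper but imprecise. As they stand, the pairs $(P_i,\Sigma_i)$ are not elements of the coend at $(X,X')$, because they carry extra environment wires. Those wires must be routed through the middle wire or discarded. That routing, together with the purification equation, is what connects them to $(\rho_i,\sigma_i)$; the shadow equations concern the idempotent $\pi$. Also, the middle of your zig-zag is not a single slide of ``$v$ or $\pi$''. Both are needed and are slid separately: the paper's chain uses three distinct slides, interleaved with the equations of Proposition~\ref{prop:firstorder_existence}.
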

\begin{proof}
  We must construct an inverse strong natural transformation $ \theta^{-1}: \mathcal{F}(A \seq B) \rightarrow \mathcal{F}A \seq \mathcal{F}B $. For this inverse arrow note that we can construct the embedding
  \begin{align*}
     & [X , (A \seq B) \parr X'] \\
    \cong & (A \seq B) \parr [X,X'] \\
    \cong &  ( A \seq B ) \parr ([X,I] \seq [I,X']) \\
     \rightarrow & (A \parr [X,I]) \seq (B \parr [I,X']), \\ 
          \cong & [X,A] \seq (B \parr X'), \\ 
  \end{align*}
  and so we have that for each $X,X'$ in $\mathbf{CPTP}$, then $ \tau \in \caus(\mathbf{CP})(X, (A \seq B) \parr X') $ implies that there exists a first-order object $Z$ such that 
  \begin{align*}
   \tikzfigscale{1}{figs/back_strong_1}  \ = \  \tikzfigscale{1}{figs/back_strong_2}  .
  \end{align*}
  Consequently we may simply choose the inverse $\theta^{-1}_{X,X'}(\tau) := (\rho, \sigma)$,
  however for this choice to be well-formed (and recalling that the sets being acted on are quotiented sets of states) we need to verify that for all $\rho,\sigma \in c_{[X, A \parr Z]} \times c_{[Z, B \parr X']}$ then:
  \[    \tikzfigscale{1}{figs/iback_strong_2}  \ = \   \tikzfigscale{1}{figs/back_strong_3} \ \implies \ (\rho_1,\sigma_1) \cong_{\texttt{co-end}} (\rho_2, \sigma_2).    \] 
  Recall that two elements of the coend are equivalent if they can be reached from one another by sliding only morphisms of $\mathbf{CPTP}$ and not the larger category $\mathbf{CP}$.
  By Proposition \ref{prop:firstorder_existence} all of the following slides are of first-order morphisms and the final proof proceeds as follows.  
  \begin{align*}
      \tikzfigscale{1}{figs/strong_coend_1} \ & = \   \tikzfigscale{1}{figs/strong_coend_2} \ = \  \tikzfigscale{1}{figs/strong_coend_3}   
    \ \sim \   \tikzfigscale{1}{figs/strong_coend_4} \ 
    = \  \tikzfigscale{1}{figs/strong_coend_7b} \ \sim \ \tikzfigscale{1}{figs/strong_coend_7} \\ 
    &   = \  \tikzfigscale{1}{figs/strong_coend_8} \   \sim \   \tikzfigscale{1}{figs/strong_coend_9} \  
    = \    \tikzfigscale{1}{figs/strong_coend_10} \ = \  \tikzfigscale{1}{figs/strong_coend_11} .   
  \end{align*}
\end{proof}

\section{Conclusion}

There are a few natural directions to take the results of this paper. 
By the Yoneda lemma for categorical supermaps \cite{wilson2026supermapsgeneralisedtheories}, natural transformations between strong profunctors on CPTs \cite{gogioso_cpt} of certain types are totally characterised in terms of the Choi-Jamiołkowski isomorphism when present, leading naturally to the question of whether the results of this paper extend to CPTs. This would give more concrete proposals for general higher-order objects within a variety of theories for which classes of higher-order processes have been introduced such as Boxworld \cite{bavaresco2024indefinitecausalorderboxworld}, Hex-square theory \cite{sengupta2024achievingmaximalcausalindefiniteness}, and time-symmetric quantum theory \cite{mrini2024indefinitecausalstructurecausal}, along with the possibility to compare with higher-order theory for bidirectional processes \cite{apadula2026higherordertransformationsbidirectionalquantum}. 

While $\StProf(\cat{C}_1)$ is typically not a BV-category, it can be lifted to one by the Chu construction \cite{chu}.
This standard construction from linear logic produces $\ast$-autonomous categories from closed monoidal ones and when applied to $\otimes$-closed normal duoidal categories it produces BV-categories \cite{hefford2025bvcategoryspacetimeinterventions}.
Applying this to strong profunctors produces a BV-category $\StEnv(\C_1)= \Chu(\StProf(\C_1))$ referred to as the \textit{strong Hyland envelope} due to its connections to the Hyland envelope of \cite{hyland_envelope,shulman_envelope}. 
It is natural then to wonder whether for any additive precausal category $\mathbf{C}$, the fully faithful lax-lax duoidal embedding of this article lifts to give a full sub-category of the strong Hyland envelope which is lax-lax BV-equivalent to the caus-construction. 

Finally, building upon this, the broadest possible sets of compositional rules available for higher-order quantum objects have been studied in \cite{simmons2024completelogiccausalconsistency, apadula2022nosignalling}. This suggests investigating the question of which (if any) compositional rules are unique to higher-order quantum operations, and which are in fact a feature of any reasonable theory of higher-order processes?
A study of the available compositional rules within the strong Hyland envelope over both general monoidal categories and its specialisation to action on CPTs \cite{gogioso_cpt} and OPTs \cite{chiribella_purification}, seems likely to give a concrete route towards answering this foundational problem on the compositionality of higher-order processes.

\subsection*{Acknowledgements}
    James Hefford is funded by the French National Research Agency (ANR) within the framework of ``Plan France 2030'', under the research projects EPIQ ANR-22-PETQ-0007 and HQI-R\&D ANR-22-PNCQ-0002. MW was funded by the Engineering and Physical Sciences Research Council [grant number EP/W524335/1].

\bibliographystyle{plainurl}
\bibliography{bibliography}

\appendix

\section{Faithfulness}

\begin{proposition}
The functor $\mathcal{F}:\mathbf{Caus}(\mathbf{C}) \rightarrow \StProf(\mathbf{C}_1)$ is faithful.
\end{proposition}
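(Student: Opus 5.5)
Suppose $f,g : A \rightarrow B$ in $\caus(\C)$ satisfy $\mathcal{F}(f) = \mathcal{F}(g)$. The plan is to show $f = g$ as morphisms of $\C$ by testing both on states, using the component at trivial first-order systems. Taking $X = X' = I$ (which is first-order, since $c_I^* = \{1\}$), the component
\[ \mathcal{F}(f)_{I,I} : \caus(\C)(I, A \parr I) \rightarrow \caus(\C)(I, B \parr I) \]
is, up to the unitor $A \parr I \cong A$, just post-composition $\rho \mapsto f \circ \rho$ on the set of states $c_A = \caus(\C)(I,A)$. So the hypothesis gives $f \circ \rho = g \circ \rho$ for every $\rho \in c_A$.

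The second step upgrades agreement on $c_A$ to equality of $f$ and $g$ as morphisms of $\C$, i.e.\ equality on all of $\C(I,A)$ via compact closure. Since $c_A$ is flat, it contains a rescaled maximally mixed state $\lambda\,\tikzfigscale{1}{figs/flat}$. Agreement on only the states of $A$ is not enough to identify $f$, so I would also use non-trivial first-order test systems. Take $X = I$ and $X' = |A|^*$, the first-order object on the dual of the underlying object. Then consider the element of $\mathcal{F}(A)(I,|A|^*) = \caus(\C)(I, A \parr |A|^*)$ given by the cup rescaled by a scalar $\alpha_A$, which is exactly the element already exhibited in the fullness sketch (``$\alpha_A$ satisfies $\cup \in c_{A\parr|A|}$''). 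Applying $\mathcal{F}(f)$ and $\mathcal{F}(g)$ to this element gives $\alpha_A (f \otimes 1)\circ \cup = \alpha_A (g\otimes 1)\circ\cup$. Cancelling the invertible scalar and bending the wire by compact closure yields $f = g$.

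The main obstacle is verifying that this scaled cup really is a causal state of $A \parr |A|^*$ in a general precausal category, without the additivity assumption used in the fullness section. I would argue this directly. An element of $(c_A^* \times c_{|A|^*}^{*})$ is a pair $(\pi, \text{discard-conjugate})$ with $\pi \in c_A^*$, since $|A|^*$ is first-order with a unique effect. Plugging the cup into such a pair computes $\pi$ applied to the transpose of discarding, i.e.\ to a normalised flat state. Flatness of $c_A$ together with $c_A^{**}=c_A$ then forces each $\pi \in c_A^*$ to evaluate the flat state to the same invertible scalar $\lambda^{-1}$. Hence a uniform rescaling $\alpha_A = \lambda$ of the cup lies in $(c_A^* \times c^*_{|A|^*})^* = c_A \parr c_{|A|^*}$.

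If this fails in full generality, the fallback is to use enough causal states, the well-pointedness axiom of precausal categories, applied to $A \parr |A|^*$-type tests. That axiom would show that agreement on all $\mathcal{F}(A)(I,X')$ for first-order $X'$ separates morphisms.
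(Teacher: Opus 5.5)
Your argument is correct and is essentially the paper's proof: evaluate the component at $(I,X')$, with $X'$ the first-order object carrying all causal states on the other leg of the cup, on the cup rescaled via flatness so that every product of some $\pi\in c_A^*$ with discarding evaluates to $1$, then cancel the invertible scalar and bend the wire by compact closure (the $(I,I)$ step, the appeal to $c_A^{**}=c_A$, and the fallback are not needed). One notational caution: write the test object as $|A^*|$ (all causal states on the underlying dual), not $|A|^*$, since the $\caus(\C)$-dual of $|A|$ has all causal states as its effects and so is not first-order; your verbal description makes clear you meant the former.
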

\begin{proof}
Consider $f:A \rightarrow B$ and $g: A \rightarrow B$ such that $\mathcal{F}(f) = \mathcal{F}(g)$. Note that for any object $A$ we can define $|A|$ as the object of $\mathbf{Caus}(\mathbf{C})$ with the same underlying object as $A$ but with \textit{all causal states}. Then see that for any object $A$  there exists an invertible scalar $\alpha_A$ such that \[ \tikzfigscale{1}{figs/cup_1} \in c_{A \parr |A|}, \] precisely, one simply chooses the $\alpha$ such that  \[ \tikzfigscale{1}{figs/cup_1b} \in c_A  \] and then notes that for this $\alpha$ and for any $\pi \in (c_A)^{*}$,  \[ \tikzfigscale{1}{figs/cup_2}  = 1 . \] 
Finally, noting that $A \parr |A| \cong A \parr A [I,|A|]$ with $|A|$ a first-order object we have that
\begin{align*}
\mathcal{F}(f) = \mathcal{F}(g) & \implies \mathcal{F}(f)_{I,|A|} = \mathcal{F}(g)_{I,|A|} \\
& \implies \caus(\C)(I, f \parr [I,|A|]) = \caus(\C)(I, g \parr [I,|A|]) \\
& \implies  \caus(\C)(I, f \parr [I,|A|])(\alpha \cup) = \caus(\C)(I, g \parr [I,|A|])(\alpha \cup) \\
& \implies  \tikzfigscale{1}{figs/cup_3} \ = \  \tikzfigscale{1}{figs/cup_4} \\
& \implies  \tikzfigscale{1}{figs/cup_5} \ = \  \tikzfigscale{1}{figs/cup_6} \\
& \implies  \tikzfigscale{1}{figs/cup_7} \ = \  \tikzfigscale{1}{figs/cup_8} 
\end{align*}
\end{proof}

\section{Fullness of the embedding for additive precausal categories}
\begin{proposition}
Let $\mathbf{C}$ be an additive precausal category, then each $c_A$ is convex (closed under convex combinations of scalars in the precausal category).
\end{proposition}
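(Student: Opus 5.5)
The plan is to show directly that for $\rho,\sigma\in c_A$ and scalars $p,q$ with $p+q=1$ (convex weights in the precausal category), the combination $p\rho+q\sigma$ lies in $c_A$. Closedness gives $c_A=c_A^{**}$, so it suffices to prove $p\rho+q\sigma\in c_A^{**}$. Concretely, for every $\pi\in c_A^{*}$ we must check that composing $\pi$ with $p\rho+q\sigma$ gives $1$.

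The key step uses the additive enrichment that comes with the existence of all products in an additive precausal category. Composition is bilinear with respect to the induced sums, so
\[ \pi\circ(p\rho+q\sigma) \;=\; p\,(\pi\circ\rho)+q\,(\pi\circ\sigma) \;=\; p\cdot 1+q\cdot 1 \;=\; p+q \;=\; 1 , \]
where $\pi\circ\rho=\pi\circ\sigma=1$ because $\rho,\sigma\in c_A$ and $\pi\in c_A^*$. This shows that $p\rho+q\sigma\in (c_A^{*})^{*}=c_A^{**}=c_A$. The same argument extends to any finite convex combination $\sum_i p_i\rho_i$, either by induction on the number of terms or by applying the identity above directly. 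Flatness plays no role here, since it is already a property of $c_A$.

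I expect the main obstacle to be bookkeeping rather than substance. We need to confirm that the sum structure on homsets provided by the additive precausal axioms of \cite{simmons2024completelogiccausalconsistency, SimmonsKissinger2022} is genuinely a biproduct-induced commutative monoid enrichment, so that composition distributes over sums on both sides. We also need that the scalars $p,q$ act compatibly, meaning $(p\rho)\circ$-type rescalings commute with composition, which holds because scalars in a symmetric monoidal category act centrally. Finally, the meaning of \emph{convex} must be fixed precisely: the weights sum to $1$ in the scalar semiring. Positivity of the weights is then not even needed for the membership argument. It matters only insofar as $p\rho+q\sigma$ must be a morphism of $\mathbf{C}$ at all, which is automatic once the homsets are closed under addition and scalar multiplication.
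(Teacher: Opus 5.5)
Your proposal is correct and is essentially the paper's argument. For every $\pi \in c_A^*$, the additive enrichment gives $\pi\circ\left(\sum_i p_i\rho_i\right) = \sum_i p_i\,(\pi\circ\rho_i) = \sum_i p_i = 1$, so the combination lies in $c_A^{**} = c_A$; you simply make the closedness step $c_A = c_A^{**}$ explicit, where the paper leaves it implicit in ``defined through normalisation to $1$ under the effect space $c_A^*$''.
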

\begin{proof}
Recall that each $c_A$ defined through normalisation to $1$ under the effect space $c_A^*$. Indeed let $\pi \circ \rho_i = 1$ for $\pi \in c_A^*$ and $\rho_i \in c_A$ then it follows that (using additive enrichment) $\pi \circ (\sum_i \rho_i) = \sum_i p_i \pi \circ \rho_i = \sum_i p_i = 1$ for any discrete probability distribution $p_i$. 
\end{proof}
\begin{proposition}
For any collection of states $\{ \rho_i \}_{i \in 1 \dots N}$ there exists a morphism $\texttt{ctrl}\{ \rho_i \}$ of type $\oplus_{i = 1}^n I \rightarrow A$ such that $\texttt{ctrl}\{ \rho_i \} \left(  \tikzfigscale{1}{figs/ground_state_i}  \right)  = \rho_i$. 
\end{proposition}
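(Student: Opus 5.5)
The morphism will be built from the universal property of the biproduct $\oplus_{i=1}^N I$, which exists because an additive precausal category has all finite products and is therefore additively enriched. In an additive category finite products and coproducts coincide. So $\oplus_{i=1}^N I$ carries coprojections $\iota_i : I \rightarrow \oplus_i I$ and projections $p_i : \oplus_i I \rightarrow I$ satisfying $p_j \circ \iota_i = \delta_{ij}$ and $\sum_i \iota_i \circ p_i = 1$. The states drawn in the statement, tikzfigscale{1}{figs/ground_state_i}, are read as the coprojections $\iota_i$, that is, the classical basis states of the $N$-outcome system $\oplus_i I$.

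First, I define $\texttt{ctrl}\{\rho_i\} := [\rho_1,\dots,\rho_N] : \oplus_i I \rightarrow A$ as the copairing given by the coproduct universal property. Equivalently, using the enrichment, $\texttt{ctrl}\{\rho_i\} := \sum_{i} \rho_i \circ p_i$.

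Second, I check the defining equation. By the universal property, $[\rho_1,\dots,\rho_N]\circ \iota_j = \rho_j$. In the additive formulation the same follows from $\sum_i \rho_i \circ p_i \circ \iota_j = \sum_i \delta_{ij}\rho_i = \rho_j$, using bilinearity of composition and the zero morphisms of the enrichment. The statement as given asks only for a morphism of $\mathbf{C}$, so no causality condition needs checking.

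The only real point of care, and the step I expect to be the main obstacle, is identifying the pictured states with the coprojections $\iota_i$. If instead they denote normalised states, for instance composites of $\iota_i$ with some scalar fixed by the discarding effect on $\oplus_i I$, then the same construction works after rescaling each $\rho_i$ by the inverse of that invertible scalar. Invertibility of the scalar is exactly what the effect-completion property of additive precausal categories provides.

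If the later fullness argument also needs $\texttt{ctrl}\{\rho_i\}$ to be causal whenever every $\rho_i$ is causal, I would add one check. Take the discarding effect on $\oplus_i I$ to be $[1,\dots,1]$. Then discarding after $\texttt{ctrl}\{\rho_i\}$ gives $[\text{discard}\circ\rho_1,\dots,\text{discard}\circ\rho_N] = [1,\dots,1]$, which is the discarding effect on $\oplus_i I$.
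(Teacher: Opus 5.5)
Your construction $\texttt{ctrl}\{\rho_i\} = \sum_i \rho_i\circ p_i$ and your check that $\texttt{ctrl}\{\rho_i\}\circ\iota_j = \rho_j$ match the paper. However, you explicitly skip the one step that carries the content, by asserting that the statement ``asks only for a morphism of $\mathbf{C}$''. In the paper $A$ is an arbitrary object of $\caus(\C)$ and the $\rho_i$ are elements of $c_A$. So ``a morphism of type $\oplus_{i} I\rightarrow A$'' means a morphism of $\caus(\C)$, i.e.\ one sending every state in $c_{\oplus_i I}$ into $c_A$. The paper's proof devotes its main computation to this (``so that the morphism is of the required type''). It is also how the result is used. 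In the fullness proof the $\rho_i$ are states of the higher-order object $[X,A\parr X']$. There $\texttt{ctrl}\{\rho_i\}$ must lie in $\mathcal{F}(A)((\oplus_i I)\otimes X,X') = \caus(\C)((\oplus_i I)\otimes X, A\parr X')$ so that the component of $S$ can be applied to it. Read as a statement about $\mathbf{C}$ alone, the proposition is just the coproduct universal property and cannot serve that purpose.

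Your fallback check, that $\texttt{ctrl}\{\rho_i\}$ preserves discarding, tests the wrong condition. Membership in $c_A$ means $\pi\circ\rho = 1$ for all $\pi\in c_A^*$, not normalisation under the discard effect. Take $A = [X,X']$ with $X,X'$ first-order. Then $c_A$ consists of Choi states (with respect to the unnormalised cup) of channels, and these discard to $\dim X$ rather than $1$. So your premise that each $\rho_i$ discards to $1$ already fails, and discard-preservation would not place the outputs in $c_A$ anyway.

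The missing argument is the paper's. For $\tau\in c_{\oplus_i I}$ put $p(i|\tau) := p_i\circ\tau$. These sum to $1$, since $\tau$ is a normalised state of $\oplus_i I$, whose discard is $\sum_i p_i$. Then $\texttt{ctrl}\{\rho_i\}\circ\tau = \sum_i p(i|\tau)\,\rho_i$, and for every $\pi\in c_A^*$ we get $\pi\circ\sum_i p(i|\tau)\rho_i = \sum_i p(i|\tau) = 1$. Hence the output lies in $c_A^{**} = c_A$. This is exactly the convexity of $c_A$ from the preceding proposition, and it rests on closedness $c_A = c_A^{**}$, which your proposal never uses.
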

Here, by $ \tikzfigscale{1}{figs/ground_state_i} : I \rightarrow \oplus_{i = 1}^n I$ we mean the canonical map for the $i^\text{th}$-component of the biproduct $\oplus_{i = 1}^n$.
\begin{proof}
For any collection of states $\{ \rho_i \}_{i \in 1 \dots N}$ of type $A$ we can construct (using the existence of biproducts) the desired morphism $\texttt{ctrl}\{ \rho_i \}$ of type $\oplus_{i = 1}^n I \rightarrow A$ as simply \[ \texttt{ctrl}\{ \rho_i \} \  = \ \sum_{i}  \left(\tikzfigscale{1}{figs/new_control_2} \right).    \] 
Indeed, we can confirm that for any $\tau \in  c_{\oplus_{i = 1}^n I}$,
\[   \sum_{i} \left( \tikzfigscale{1}{figs/new_control_3} \right) \ = \ \sum_{i} p(i | \tau)  \ \tikzfigscale{1}{figs/new_control_4} \ \in  \ c_A,  \]
so that the morphism is of the required type.
Finally, we confirm the control property.
\[   \sum_{i} \left( \tikzfigscale{1}{figs/new_control_5} \right) \ = \ \sum_{i} p(i | \tau)  \ \tikzfigscale{1}{figs/new_control_6} \ \in  \ c_A.  \]
\end{proof}

\begin{proposition}
  Let $\mathbf{C}$ be an additive precausal category, then the functor $\mathcal{F}:\caus(\C) \rightarrow \StProf(\C_1)$ is full.
\end{proposition}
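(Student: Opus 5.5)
Given a strong natural transformation $S:\mathcal{F}(A)\to\mathcal{F}(B)$, I will construct a morphism $f:A\to B$ in $\caus(\C)$ with $\mathcal{F}(f)=S$. I follow the three-step outline sketched in the main text.

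\emph{Candidate preimage.} Take the scaled cup $\alpha_A\cup$. By the argument in the faithfulness proof it lies in $c_{A\parr|A|}$, so it is an element of $\mathcal{F}(A)(I,|A|)$; here $|A|$ is first-order and $A\parr|A|\cong A\parr[I,|A|]$. Apply $S_{I,|A|}$ to get a state of $B\parr|A|$. Bending the $|A|$ wire back with a cap and multiplying by $\alpha_A^{-1}$ defines a $\C$-morphism $f:A\to B$; this is the $S'$ displayed earlier.

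\emph{Agreement with $S$.} I must show $S_{X,X'}(\rho)$ equals $(f\parr X')\circ\rho$ for every $\rho\in\caus(\C)(X,A\parr X')$. I first reduce to states, $X=I$, by strength, tensoring with an identity on $X$ and using compact closure to turn $X$ into $[X,I]$. The task is then to express $\rho$ as "the cup with something plugged into the $|A|$ leg". Naively $\rho=(1_A\otimes\hat\rho)\circ\cup$, with $\hat\rho$ the transpose of $\rho$ viewed as a map $|A|\to X'$. But $\hat\rho$ is generally not causal, so it is not a $\C_1$-morphism, and naturality cannot be applied to it directly. This is where the two preparatory steps come in.

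First, \emph{convex-linearity}. I show $S_{XX'}$ preserves convex mixtures. Using the control morphism $\texttt{ctrl}\{\rho_i\}$ out of $\oplus_i I$, which is first-order, a mixture is obtained from a single "controlled" element by precomposing a classical state. Naturality in that first-order input then gives $S(\sum p_i\rho_i)=\sum p_i S(\rho_i)$, since convexity of the $c$'s is already established.

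Second, \emph{non-contextual assignment}. Decompose $\hat\rho$ into effects $\pi_i\le\text{discard}$ that complete to a causal instrument. This uses additivity: any effect completes to discard up to a scalar. The instrument $\sum_i \pi_i\otimes|i\rangle$ is then a genuine $\C_1$-morphism $|A|\to\oplus_i I\otimes X'$. By naturality, $S$ commutes with plugging in this instrument, and post-selecting on outcome $i$ (again via control and convex-linearity) shows $S$ commutes with each $\pi_i$ separately. This argument also shows that the result depends only on the composite and not on the chosen decomposition, which is the displayed implication. Linearity then reassembles $\hat\rho$, yielding $S(\rho)=(f\parr X')\circ\rho$ after cancelling the scalar $\alpha_A$.

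\emph{Typing of $f$.} Finally, $f$ maps $c_A$ into $c_B$, because $f\circ\rho=S_{I,I}(\rho)\in c_B$ for every $\rho\in c_A$. So $f$ is a morphism of $\caus(\C)$, and $\mathcal{F}(f)=S$.

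The main obstacle is the non-contextual assignment step: extending $S$'s naturality from causal first-order maps to arbitrary subnormalised effects $\pi_i$. This needs careful scalar rescaling so that the completed instrument is causal, and a check that naturality in the classical control wire legitimately isolates each branch.
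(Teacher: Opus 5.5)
Your candidate preimage, the convex-linearity argument via $\texttt{ctrl}$, and the final typing check all match the paper. The step that actually establishes $S(\rho)=(f\parr X')\circ\rho$, however, does not go through as you describe it. Splitting $\hat\rho$ into several pieces $\pi_i$, post-selecting each branch ``inside'' $S$, and reassembling ``by linearity'' fails for two reasons.

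First, $S_{X,X'}$ is only a function on normalised elements (morphisms into $c_{A\parr X'}$), and it is only convex-linear. So there is no linearity available to reassemble a signed decomposition. If instead the $\pi_i$ are genuine effects, an instrument $|A|\to\oplus_i I\otimes X'$ built from them is measure-and-prepare. No non-negative decomposition of that kind exists once $\hat\rho$ is not entanglement-breaking, e.g.\ for $\rho=\alpha_A\cup$ itself, where $\hat\rho$ is a multiple of the identity.

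Second, for a multi-outcome instrument the branches $(1\otimes\pi_i)\alpha_A\cup$ are in general not scalar multiples of elements of $c_{A\parr X'}$. Their weight against $\pi\otimes\text{discard}$ varies with $\pi\in c_A^*$ as soon as $A$ is higher-order. For example, take $A=[Y,Y']$ and $\pi_i$ a measurement on the input part of the $|A|$ leg: the branch weight then depends on the input state $y$ in $\pi=y\otimes\text{discard}$. So the flagged state is not a convex mixture of flagged elements of $\mathcal{F}(A)$, and ``$S$ commutes with each $\pi_i$'' has no meaning. This is exactly the check you leave open at the end, and it is where the proof lives.

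The paper never applies $S$ to anything unnormalised. The non-causal pieces (the cap and an arbitrary test effect $\pi$ on the remaining legs) are combined into one subnormalised effect $\kappa(\cap\otimes\pi)\le\text{discard}$. The non-contextual implication is proved with the two-outcome instrument $|0\rangle\otimes\pi_i+|1\rangle\otimes(\text{discard}-\pi_i)$ and a $\tfrac12$-mixing rearrangement. This way convex-linearity and naturality carry $S$ past everything, and the post-selection on $\langle 0|$ happens outside $S$. Agreement then follows because enough effects $\pi$ separate states.

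Your own idea can also be repaired without any decomposition. Take the single causal instrument $\mathcal{I}=|0\rangle\otimes\kappa\hat\rho+|1\rangle\otimes\sigma_0 e$, where $e$ completes $\kappa\,\text{discard}\circ\hat\rho$ to discard and $\sigma_0$ is any causal state of $X'$.
\begin{itemize}
\item The $0$-branch of $(1\otimes\mathcal{I})\alpha_A\cup$ is exactly $\kappa\alpha_A\rho$, with constant weight $\kappa\alpha_A$. Hence the other branch is $(1-\kappa\alpha_A)\tau$ with $\tau\in c_{A\parr X'}$.
\item Naturality and convex-linearity then give $(1\otimes\mathcal{I})S(\alpha_A\cup)=\kappa\alpha_A|0\rangle\otimes S(\rho)+(1-\kappa\alpha_A)|1\rangle\otimes S(\tau)$.
\item Post-selecting on $\langle 0|$ in $\C$ yields $S(\rho)=\alpha_A^{-1}(1\otimes\hat\rho)S(\alpha_A\cup)=(f\parr X')\circ\rho$.
\end{itemize}

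Separately, in your reduction to states, $[X,I]$ is not first-order and so cannot be an argument of $\mathcal{F}(A)$. Use the strength with $1_{|X^*|}$, precompose a normalised cup in $\C_1(I,X\otimes|X^*|)$, and undo it with a snake in $\C$.
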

\begin{proof}
First, we will show that for any morphism of strong profunctors $S: \mathcal{F}(A) \rightarrow \mathcal{F}(B)$ every component $S_{X,X'}$ is a convex linear function $S_{X,X'}: c_{[X,A \parr X']} \rightarrow c_{[X,B \parr X']}$.
To do so, consider some family of states $\rho_i$ of type $[X,A \parr X']$ and construct the controlled version of this family of type 
\begin{align*}
[\oplus_{i = 1}^n I, [X,A \parr X']]  \cong [(\oplus_{i = 1}^n I) \otimes X,A \parr X'].
\end{align*}
Now let us confirm convexity 
\[
    \tikzfigscale{1}{figs/iconvex_1} \ = \  \tikzfigscale{1}{figs/iconvex_2}  \ = \  \tikzfigscale{1}{figs/iconvex_3}   \]
   \[  \ = \ \sum_i p_i  \   \tikzfigscale{1}{figs/iconvex_6}  \ = \ \sum_i p_i  \   \tikzfigscale{1}{figs/iconvex_7}  \ = \ \sum_i p_i  \   \tikzfigscale{1}{figs/iconvex_8} \]

   Next we show a useful technical property of the $S_{X,X'}$ components for each $X,X'$.
   Let us imagine that there exists $\pi_i$, $ \rho_i$,  with each $\pi_i \leq  \tikzfigscale{1}{figs/ground}$ such that \[  \tikzfigscale{1}{figs/extension_1} \ = \   \tikzfigscale{1}{figs/extension_2},  \] 
   
   then it follows that 
   \[  \tikzfigscale{1}{figs/extension_3} \ = \   \tikzfigscale{1}{figs/extension_4}.  \] 
   To prove this note that we can construct from each such $\pi_i$ of type $X_i \rightarrow I$ in the underlying precausal category a new morphism $\Sigma_i: X_i \rightarrow (I \oplus I)$ in $\caus(\C)$ by  \[  \tikzfigscale{1}{figs/extension_proof_1}  \ \otimes \       \tikzfigscale{1}{figs/extension_less_1} \ + \ \tikzfigscale{1}{figs/extension_proof_2} \  \otimes  \ \left(  \  \tikzfigscale{1}{figs/iextension_less_2} \ - \  \tikzfigscale{1}{figs/extension_less_1} \right).  \] 
   Now we note that
            \begin{align*}
 \tikzfigscale{1}{figs/iget_zero_0a} \ - \  \tikzfigscale{1}{figs/iget_zero_0b}  \ = \  \tikzfigscale{1}{figs/iget_zero_1} \ - \  \tikzfigscale{1}{figs/iget_zero_2},
      \end{align*}
      from which we have that 
       \begin{align*}
\frac{1}{2} \tikzfigscale{1}{figs/iget_zero_0a} \ + \ \frac{1}{2} \tikzfigscale{1}{figs/iget_zero_2}  \ = \ \frac{1}{2} \tikzfigscale{1}{figs/iget_zero_1} \ + \ \frac{1}{2} \tikzfigscale{1}{figs/iget_zero_0b},
      \end{align*}
    Upon applying $S_{I,(I \oplus I)}$ to each side of the equation, using convex linearity, and then un-arranging we find that 
      \begin{align*}
 \tikzfigscale{1}{figs/get_zero_5} \ - \  \tikzfigscale{1}{figs/get_zero_6}  \ = \  \tikzfigscale{1}{figs/iget_zero_7} \ - \  \tikzfigscale{1}{figs/iget_zero_8} .
      \end{align*} Now we are ready to prove, 
      \begin{align*}
        \tikzfigscale{1}{figs/extension_3} \ - \   \tikzfigscale{1}{figs/extension_4} & \  = \       \tikzfigscale{1}{figs/iget_zero_9} \ - \   \tikzfigscale{1}{figs/iget_zero_10}    \\
  & \  = \       \tikzfigscale{1}{figs/iget_zero_11} \ - \   \tikzfigscale{1}{figs/iget_zero_12} \\
 & \  = \       \tikzfigscale{1}{figs/iget_zero_13} \ - \   \tikzfigscale{1}{figs/iget_zero_14} \\
  & \  = \       \tikzfigscale{1}{figs/get_zero_15} \ - \   \tikzfigscale{1}{figs/get_zero_16} \\
  & \ = \ 0 - 0 \ = \ 0
      \end{align*} 

  Finally, we are able to prove our main characterisation theorem, which establishes that $\mathcal{F}$ is full, by establishing that every possible morphism of strong profunctors $S:\mathcal{F}(A) \rightarrow \mathcal{F}(B)$ is in the image of $\mathcal{F}$, i.e.\ is of the form $\mathcal{F}(S')$.
  We must explicitly construct this pre-image $S': A \rightarrow B$ in $\caus(\C)$ and do so simply by defining \[   S'  \ :=   \ \tikzfigscale{1}{figs/preimage_1}.  \] Now we must confirm that indeed $\mathcal{F}(S') = S$, to do so note that
      \begin{align*}
     \mathcal{F}(S')(X,X')(\rho) \ =  \ \tikzfigscale{1}{figs/preimage_2} \ & = \ \tikzfigscale{1}{figs/preimage_3},
      \end{align*} and now since $\mathbf{C}$ is an additive precausal category for all $\pi$ there exists $\kappa$ 
      such that 
      $\kappa ( \cap \otimes \pi)  \leq   \tikzfigscale{1}{figs/ground_no} \otimes   \tikzfigscale{1}{figs/ground_no} \otimes  \tikzfigscale{1}{figs/ground_no}$. 
      This allows use to use the previous technical lemma to conclude that for all such $\pi$,
      \[   \tikzfigscale{1}{figs/preimage_5} \ = \   \tikzfigscale{1}{figs/preimage_6},   \]
      and since there are enough such $\pi$ to fix any state, we can conclude that
      \[   \tikzfigscale{1}{figs/preimage_3} \ = \   \tikzfigscale{1}{figs/preimage_4} \ = \   \tikzfigscale{1}{figs/preimage_4b}.   \] 
      Finally, we note that for all $\tau$
      \[   \tikzfigscale{1}{figs/final_0} \ = \   \tikzfigscale{1}{figs/final_1}  \ = \   \tikzfigscale{1}{figs/final_2},   \] 
      and so  \[   \tikzfigscale{1}{figs/final_3}  \ = \   \tikzfigscale{1}{figs/final_4},.  \]  

      This completes the proof that indeed $\mathcal{F}(S') = S$.  
\end{proof}

\section{Shadows for higher-order objects}

\begin{proposition}
  Consider $\caus(\mathbf{CP})$.
  For any $\rho$ of type $X \rightarrow A$ there exists a first order object $E_{\psi}$ and a pure morphism $\psi$ of type $X \rightarrow E_{\psi} \parr A$ such that \[ \tikzfigscale{1}{figs/dilate_1} \ = \ \tikzfigscale{1}{figs/dilate_2}.  \]
  Furthermore this purification is unique up to isometry, meaning that for any other purification $(E_{\psi'} , \psi')$ with $\texttt{dim}(E_{\psi}) \leq \texttt{dim}(E_{\psi'})$ there exists a pure (first-order causal) morphism of type $v: E_{\psi} \rightarrow E_{\psi'}$ such that   \[ \tikzfigscale{1}{figs/dilate_3} \ = \ \tikzfigscale{1}{figs/dilate_4}.  \]

\end{proposition}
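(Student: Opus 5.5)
We reduce the statement to the Stinespring dilation theorem in $\mathbf{CP}$, together with its uniqueness clause, and then check that the dilating data lie in the required $\caus(\mathbf{CP})$ types. The diagrams are not visible, so we take the first equation to state that discarding the environment $E_\psi$ of $\psi$ recovers $\rho$. We take the second to state that $(v \parr A)\circ\psi = \psi'$.

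\textbf{Existence.} Regard $\rho$ as a morphism $X \to A$ of $\caus(\mathbf{CP})$, with underlying CP map $\rho: X \to A$ between the underlying finite-dimensional systems. Apply the Kraus/Stinespring theorem in $\mathbf{CP}$. This gives a finite-dimensional Hilbert space $E$ and a linear map $V$ with $\rho(\cdot) = \mathrm{Tr}_E(V \cdot V^\dagger)$. Set $\psi = V(\cdot)V^\dagger$, a pure CP map $X \to E \otimes A$. Let $E_\psi$ be the first-order object on $E$, namely $(|E|, E)$, whose effect set is the singleton discard. By definition, discarding $E_\psi$ on $\psi$ gives $\rho$.

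We must show that $\psi$ is a morphism $X \to E_\psi \parr A$ in $\caus(\mathbf{CP})$. Since $X$ is first-order, morphisms $X \to E_\psi \parr A$ correspond to states of $[X, E_\psi \parr A] = X^* \parr E_\psi \parr A$. This set is $(c_X \times c_{E_\psi}^* \times c_A^*)^*$ up to associativity, so it is characterised by normalisation. It therefore suffices to check that for every $\sigma \in c_X$ (a normalised state), every $\pi \in c_A^*$, and the unique discard on $E$, the scalar obtained by plugging these into $\psi$ equals $1$. By the dilation equation this scalar equals $\pi \circ \rho \circ \sigma$. Since $\rho$ is a morphism of $\caus(\mathbf{CP})$, we have $\rho\circ\sigma \in c_A$, so the scalar is $1$. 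A closure/duality check of this kind is the only nontrivial typing step. We should also handle the possibility that $\parr$ involves a double dual, by noting that the sets are closed, so $c = c^{**}$.

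\textbf{Uniqueness.} Suppose $(E_{\psi'}, \psi')$ is another purification with $\dim E_\psi \le \dim E_{\psi'}$. The standard uniqueness of Stinespring dilations in $\mathbf{CP}$ yields an isometry $W: E \to E'$ with $(W\otimes 1)V = V'$. Here one uses that minimal dilations are unique up to unitary and that any dilation embeds a minimal one, handling the non-minimal case by also comparing with a minimal dilation. Set $v = W(\cdot)W^\dagger$, which is pure, and for which $\mathrm{Tr}\circ v = \mathrm{Tr}$. Thus $v$ is trace preserving, i.e.\ a morphism $E_\psi \to E_{\psi'}$ of $\mathbf{CPTP} = \mathbf{C}_1$, and hence first-order causal. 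Then $(v\parr A)\circ\psi = \psi'$ holds in $\mathbf{CP}$, and therefore in $\caus(\mathbf{CP})$ by faithfulness of the underlying morphisms.

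\textbf{Main obstacle.} The delicate step is the case where $\psi$ itself is not minimal. Then the isometry must be obtained from the minimal dilation: write $V = (W_0\otimes 1)V_{\min}$ and $V' = (W_0'\otimes 1)V_{\min}$. When $\dim E \le \dim E'$, choose an isometry $W$ with $W W_0 = W_0'$, by extending $W_0'W_0^\dagger$ on the range of $W_0$ to an isometry on all of $E$. This is possible because $\dim E \le \dim E'$, and it is the standard argument. Since existence lets us choose $\psi$ freely, we may instead take $\psi$ minimal from the start, which makes $\dim E_\psi \le \dim E_{\psi'}$ automatic.
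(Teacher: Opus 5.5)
Your proposal is correct and takes essentially the same route as the paper: existence comes from the dilation (Stinespring) theorem, $\psi$ is typed by checking normalisation against $c_{E_\psi}^*\times c_A^*$, and uniqueness is inherited from uniqueness of dilations up to an isometry, whose induced map $v$ is trace-preserving and hence first-order causal. You give more detail than the paper on the non-minimal case, and your appeal to $X$ being first-order (which the statement does not assume) is unnecessary, since the condition $\psi\circ\sigma \in (c_{E_\psi}^*\times c_A^*)^*$ for all $\sigma\in c_X$ can be checked directly.
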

\begin{proof}
  It is immediate by the dilation theorem for completely positive maps that such a $\psi$ satisfying the above equation exists.
  To check that is has type $X \rightarrow E_{\psi} \parr A$ is then clear since by the defining equation it is normalised under any element of $c_{E_{\psi}}^{*} \times c_A^{*}$. The existence of an isometry $v$ satisfying the required equation is directly inherited too, and since it is an isometry, it indeed is of type $v: E_{\psi} \rightarrow E_{\psi'}$ for $E_{\psi}$ and $E_{\psi'}$ first-order types. 
\end{proof}

\begin{proposition}
  Let $\rho_1,\sigma_1, \rho_2',\sigma_2'$ be morphisms in $\mathbf{Caus}(\mathbf{CP})$ such that  \[    \tikzfigscale{1}{figs/iback_strong_2}  \ = \   \tikzfigscale{1}{figs/back_strong_3} ,\] with $X,X',Z_1,Z_2$ first-order objects then there exists corresponding purifications $P_1, \Sigma_1,P_2, \Sigma_2 $, and first-order causal processes $\pi$ and $V$ such that: 
  \[   \tikzfigscale{1}{figs/dilate_strong_5} \ = \      \tikzfigscale{1}{figs/dilate_strong_6} ,  \quad \quad \tikzfigscale{1}{figs/shadow_1} \ = \   \tikzfigscale{1}{figs/shadow_2},  \quad \quad   \tikzfigscale{1}{figs/shadow_3} \ = \   \tikzfigscale{1}{figs/shadow_4}.  \] 
\end{proposition}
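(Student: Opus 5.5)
We reduce the statement to the uniqueness-up-to-isometry of purifications from the preceding proposition. The three target equations are not visible to us, so we read them as follows. The hypothesis says two coend representatives $(\rho_1,\sigma_1)$ and $(\rho_2,\sigma_2)$, through first-order intermediates $Z_1$ and $Z_2$, compose to the same morphism $X \to (A\seq B)\parr X'$. The first conclusion says purified versions $P_i,\Sigma_i$ of $\rho_i,\sigma_i$ compose to pure morphisms related by a first-order isometry $v$. The two ``shadow'' equations say that discarding the purifying environments, possibly after the first-order causal process $\pi$, recovers the original $\rho_i$ and $\sigma_i$.

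\textbf{Step 1 (local purifications).} Apply the preceding proposition to each of $\rho_1,\sigma_1,\rho_2,\sigma_2$. This gives pure $P_i : X \to E_{P_i}\parr A\parr Z_i$ and $\Sigma_i : Z_i \to E_{\Sigma_i}\parr B\parr X'$ with first-order environments, such that discarding $E$ recovers $\rho_i$ and $\sigma_i$. Each type claim holds because a purification of a normalised morphism stays normalised against $c_E^*\times c^*$, and first-order $E$ has singleton dual, namely discard.

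\textbf{Step 2 (global comparison).} Since $Z_i$ is first-order, the composite $\Sigma_i\circ P_i$ along $Z_i$ is a pure morphism. Discarding $E_{P_i}\otimes E_{\Sigma_i}$ yields $\sigma_i\circ\rho_i$, so $\Sigma_i\circ P_i$ is a purification of the common morphism $\tau$. By the uniqueness clause, after padding the smaller environment by tensoring a pure first-order ancilla state so that dimensions match, there is a first-order isometry $V : E_{P_1}\otimes E_{\Sigma_1} \to E_{P_2}\otimes E_{\Sigma_2}$ with $V\circ(\Sigma_1\circ P_1) = \Sigma_2\circ P_2$. This gives the first equation.

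\textbf{Step 3 (shadows, the main obstacle).} We must produce a first-order causal $\pi$ so that, after applying $V$ and tracing, the first-stage data $\rho_1$ sits correctly relative to the second-stage environment $E_{\Sigma_2}$. Since $E_{\Sigma_1}$ is created only after $Z_1$, we plan to use the one-way signalling structure. Discarding $B$, $X'$ and $E_{\Sigma_i}$ on both sides of Step 2 shows that $\mathrm{Tr}_{E_{\Sigma_2}}\circ V$ gives two purifications of the same marginal: one of $\rho_1$ (with environment $E_{P_1}\otimes Z_1$, the $Z_1$ part having been discarded through the causal $\sigma_1$) and one of $\rho_2$. Applying the uniqueness clause again to these marginal purifications should yield a first-order channel $\pi : E_{P_2}\otimes Z_2 \to E_{P_1}\otimes Z_1$, or a map in the reverse direction, obtained by discarding part of an isometry. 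With this $\pi$ the shadow equations hold: discarding the output of $\pi$ is discarding its input, since $\pi$ is causal, so the required identities follow by causality of all first-order maps involved.

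The delicate point is making the environments line up so that the same $V$ serves in the global equation and in both marginal equations. We would first try to factor $V$ itself as a comb, using that a process from $E_{P_1}\otimes E_{\Sigma_1}$ that is semi-causal from $E_{\Sigma_1}$ to $E_{P_2}$ is semi-localisable. This is the single-variate causality-implies-compositionality theorem recalled in the introduction, and it would supply $\pi$ directly as the first tooth of the comb.
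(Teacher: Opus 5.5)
Your Steps 1 and 2 match the opening of the paper's proof. Purify each of $\rho_i,\sigma_i$, note that both $\Sigma_i\circ P_i$ purify the common composite, and use uniqueness up to isometry to obtain $V$ and the first equation. The gap is Step 3, which is where the real content of the proposition lies.

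You read the shadow equations as ``discarding the environments, perhaps after $\pi$, returns $\rho_i$ and $\sigma_i$''. That is already the defining property of a purification, and on that reading $\pi$ would be superfluous. In the paper, $\pi$ is an \emph{idempotent} CPTP endomorphism obtained from the shadow construction of \cite{carette_delayedtrace}. Its two equations are what license the coend slides in the strength theorem. Roughly, $\pi$ leaves the purified first stage unchanged, and after $\pi$ the two second stages (related through $V$) coincide. Such a map is needed because the hypothesis only forces the second stages to agree on the support of the intermediate state produced by the first stage. Off that support they may differ. Since only $\mathbf{CPTP}$ maps can be slid, you must project onto that support with a first-order map. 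An example is $\omega\mapsto\Pi\omega\Pi+\mathrm{Tr}[(1-\Pi)\omega]\,\omega_0$, with $\Pi$ the support projector and $\omega_0$ a state in its range. Your closing claim that the identities ``follow by causality of all first-order maps involved'' does not address this.

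Neither of your proposed mechanisms supplies such a $\pi$.

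Re-applying uniqueness of purification to the first-stage marginals is legitimate, with two corrections:
\begin{itemize}
\item $\sigma_i$ is not causal, so you must apply some $\beta\in c_B^*$ (for instance a rescaled discard, which exists by flatness) rather than the bare discard on $B$.
\item It is $P_1,P_2$ themselves, not $\mathrm{Tr}_{E_{\Sigma_2}}\circ V$, that purify the common marginal $X\to A$.
\end{itemize}
Even so, this yields a map between the different systems $Z_1\otimes E_{P_1}$ and $Z_2\otimes E_{P_2}$ relating $P_1$ to $P_2$. That gives one of the slides, not an idempotent that fixes $P_1$ and equalises the second stages.

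The comb-factorisation route presupposes that $V$ is semi-causal from $E_{\Sigma_1}$ to $E_{P_2}$. But $V$ is only pinned down on the support of the environment of $\Sigma_1\circ P_1$, and nothing in $V\circ\Sigma_1\circ P_1=\Sigma_2\circ P_2$ controls its signalling on other inputs. Moreover, even given such a factorisation, you would again be left with two representatives of the same pure comb that agree only on a support.

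The missing idea is exactly this support-projecting idempotent, the shadow.
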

\begin{proof}
  Firstly, note that each of these states comes equipped with a corresponding purification $P,\Sigma,P',\Sigma'$ such that 
  \[ \tikzfigscale{1}{figs/dilate_strong_1} \ = \      \tikzfigscale{1}{figs/dilate_strong_2},   \] 
  from which we find for any morphism in $\mathbf{Caus}(\mathbf{CP})$ of type \[ \tikzfigscale{1}{figs/dilate_strong_new_1} \ = \      \tikzfigscale{1}{figs/dilate_strong_new_2},   \]  and so
  \[ \tikzfigscale{1}{figs/dilate_strong_3} \ = \      \tikzfigscale{1}{figs/dilate_strong_4}  \ \implies \  \tikzfigscale{1}{figs/dilate_strong_5} \ = \      \tikzfigscale{1}{figs/dilate_strong_6} . \] 

  It now follows that,
  \[\tikzfigscale{1}{figs/dilate_strong_1} \ = \  \tikzfigscale{1}{figs/dilate_strong_2} \ = \      \tikzfigscale{1}{figs/dilate_strong_7}.   \] 
From here, the proof of \cite{carette_delayedtrace} on the existence of shadows for completely positive maps applies, and we are able to construct an idempotent CPTP (and so, first-order causal) map $\pi$ satisfying: 
   \[   \tikzfigscale{1}{figs/shadow_1n} \ = \   \tikzfigscale{1}{figs/shadow_2n} \quad \quad  \text{and} \quad \quad   \tikzfigscale{1}{figs/shadow_3n} \ = \   \tikzfigscale{1}{figs/shadow_4n}.  \] 
\end{proof}

\end{document}